\newcommand{\InConfVer}[1]{}
\newcommand{\InFullVer}[1]{#1}
\newcommand{\InConfVer}[1]{#1}
\newcommand{\InFullVer}[1]{}
\newcommand{\ts}{\hspace{0.6pt}}%
\renewcommand{\th}{\si{th}\xspace}
\providecommand{\si}[1]{#1}
\newcommand{\AnneThanks}[1]{\thanks{%
      Department of Information and Computing Sciences; Utrecht
      University; The Netherlands;
      \texttt{anne}\hspace{0cm}\texttt{\atgen{}cs.uu.nl}. %
      #1}}
\newcommand{\BenThanks}[1]{\thanks{Department of Computer Science;
      University of Illinois; %
      201 N. Goodwin Avenue; %
      Urbana, IL, 61801, USA; %
      {\tt \si{raichel}2\atgen{}\si{uiuc}.\si{edu}}; %
      {\tt \url{\si{http://www.cs.uiuc.edu/\string~\si{raichel2}}}}%
      . %
      #1}}
\newcommand{\atgen}{\symbol{'100}}
\newcommand{\SarielThanks}[1]{\thanks{Department of Computer Science;
      University of Illinois; 201 N. Goodwin Avenue; Urbana, IL,
      61801, USA; {\tt \si{sariel}\atgen{}\si{uiuc.edu}}; %
      {\tt \url{http://www.uiuc.edu/\string~sariel/}.} #1}}
\definecolor{blue25}{rgb}{0,0,0.55}%
\newcommand{\emphic}[2]{%
   \textcolor{blue25}{%
      \textbf{\emph{#1}}}%
   \index{#2}}
\newcommand{\emphi}[1]{\emphic{#1}{#1}}
\newcommand{\pth}[2][\!]{#1\left({#2}\right)}
\definecolor{red25}{rgb}{0.4,0,0.0}%
\newtheorem{theorem}{Theorem}[section]
\newtheorem{lemma}[theorem]{Lemma}%
\newtheorem{definition}[theorem]{Definition}
\newtheorem{corollary}[theorem]{Corollary}
\newtheorem{fact}[theorem]{Fact}%
\newtheorem{observation}[theorem]{Observation}}
\newtheorem{example}[theorem]{Example}}
\newcommand{\brc}[1]{\left\{ {#1} \right\}}
\newcommand{\west}{west\xspace}
\newcommand{\east}{east\xspace}
\newcommand{\south}{south\xspace}
\newcommand{\north}{north\xspace}
\newcommand{\southeast}{south-east\xspace}
\newcommand{\northwest}{north-west\xspace}
\newcommand{\Disk}{\mathsf{d}}%
\newcommand{\Term}[1]{\textsf{#1}}%
\newcommand{\PVD}{\Term{PVD}\xspace}
\newcommand{\ringX}[1]{\EuScript{R}_{#1}}
\renewcommand{\Re}{{\rm I\!\hspace{-0.025em} R}}
\newcommand{\aftermathA}{\par\vspace{-\baselineskip}}%
\newcommand{\MakeBig} {\rule[-.2cm]{0cm}{0.4cm}}
\newcommand{\MakeSBig}{\rule[0.0cm]{0.0cm}{0.38cm}} 
\newcommand{\seclab}[1]{\label{sec:#1}}
\newcommand{\secref}[1]{Section~\ref{sec:#1}}
\newcommand{\thmlab}[1]{{\label{theo:#1}}}%
\newcommand{\thmref}[1]{Theorem~\ref{theo:#1}}%
\providecommand{\deflab}[1]{\label{def:#1}}
\newcommand{\defref}[1]{Definition~\ref{def:#1}}
\newcommand{\corlab}[1]{\label{cor:#1}}
\newcommand{\corref}[1]{Corollary~\ref{cor:#1}}
\newcommand{\exelab}[1]{\label{exer:#1}}
\newcommand{\exeref}[1]{Exercise~\ref{exer:#1}}
\newcommand{\lemlab}[1]{\label{lemma:#1}}
\newcommand{\lemref}[1]{Lemma~\ref{lemma:#1}}
\newcommand{\obslab}[1]{\label{observation:#1}}
\newcommand{\obsref}[1]{Observation~\ref{observation:#1}}
\newcommand{\factlab}[1]{\label{fact:#1}}
\newcommand{\factref}[1]{Fact~\ref{fact:#1}}
\newenvironment{proof}{\trivlist\item[]\emph{Proof}:}%
{\unskip\nobreak\hskip 1em plus 1fil\nobreak%
   \myqedsymbol
   \parfillskip=0pt%
   \endtrivlist}
{\unskip\nobreak\hskip 1em plus 1fil\nobreak%
   \myqedsymbol
   \parfillskip=0pt%
   \endtrivlist}
\newcommand{\myqedsymbol}{\rule{2mm}{2mm}}
\newcommand{\cardin}[1]{\left| {#1} \right|}
\newcommand{\pbrcx}[1]{\left[ {#1} \right]}
\newcommand{\Ex}[2][\!]{\mathop{\mathbf{E}}#1\pbrcx{#2}}
\newcommand{\Prob}[1]{\mathop{\mathbf{Pr}}\!\pbrcx{#1}}
\newcommand{\sep}[1]{\,\left|\, {#1} \MakeBig\right.}
\providecommand{\ds}{\displaystyle}
\newcommand{\remove}[1]{}
\newcommand{\ceil}[1]{\left\lceil {#1} \right\rceil}
\newcommand{\floor}[1]{\left\lfloor {#1} \right\rfloor}
\newcommand{\etal}{\textit{et~al.}\xspace}
\newcommand{\site}{\mathsf{s}}
\newcommand{\siteA}{\mathsf{t}}
\newcommand{\siteB}{\mathsf{r}}
\newcommand{\pnt}{{\mathsf{p}}}%
\newcommand{\pntA}{{\mathsf{q}}}%
\newcommand{\pntB}{{\mathsf{s}}}
\newcommand{\pntC}{{\mathsf{t}}}
\newcommand{\Line}{\ell}
\newcommand{\PntSet}{\mathsf{P}}%
\newcommand{\Terrain}{\EuScript{T}}
\newcommand{\Triangulation}{\Delta}
\newcommand{\Vertices}{\mathsf{V}}
\newcommand{\Edges}{\mathsf{E}}
\newcommand{\Domain}{D}
\newcommand{\Sites}{\mathsf{P}}
\newcommand{\VD}[1]{\EuScript{V}or(#1)}
\newcommand{\ldConst}{\lambda}
\newcommand{\slConst}{\xi}
\newcommand{\distGeo}[2]{\mathsf{d}_{\Terrain}\pth{#1,#2}}
\newcommand{\distX}[2]{\|#1-#2\|}
\newcommand{\distSet}[2]{\mathsf{d}\pth{#1, #2}}
\newcommand{\Cell}{\mathsf{C}}
\newcommand{\Chords}{\mathsf{B}}
\newcommand{\Intersections}{\chi}
\newcommand{\Graph}{\EuScript{G}}
\newcommand{\chords}{chords\xspace}
\newcommand{\slFactor}{\beta}
\newcommand{\diamConst}{\delta}
\newcommand{\constA}{c_1}%
\newcommand{\constB}{c_2}%
\newcommand{\constC}{c_3}%
\newcommand{\VorCell}[2]{\mathrm{cell}\pth{#1, #2}}
\newcommand{\mysqrt}[1]{\ts\!\!\sqrt{#1}}
\newcommand{\fatX}[1]{\mathrm{fat}\pth{#1}}
\newcommand{\pbrc}[2][\!\!]{#1\left[ {#2} \MakeBig \right]}
\newcommand{\Square}{\Box}
\newcommand{\Sample}{\mathsf{P}}%
\newcommand{\disk}{\mathrm{disk}}%
\newcommand{\Renyi}{R{\'e}nyi\xspace}
\newcommand{\apndlab}[1]{\label{apnd:#1}}
\newcommand{\apndref}[1]{Appendix~\ref{apnd:#1}}
\newcommand{\slTFactor}{\slFactor^5}
\begin{document}

\title{On the Expected Complexity of\\ Voronoi Diagrams on Terrains}

\author{%
   Anne Driemel%
   \AnneThanks{This work has been supported by the Netherlands
      Organisation for Scientific Research (NWO) under RIMGA
      (Realistic Input Models for Geographic Applications).}%
   \and%
   Sariel Har-Peled%
   \SarielThanks{Work on this paper was partially supported by a NSF
      AF award CCF-0915984.}%
   \and %
   Benjamin Raichel%
   \BenThanks{} }

\date{\today}

\maketitle

\begin{abstract}
    We investigate the combinatorial complexity of geodesic Voronoi
    diagrams on polyhedral terrains using a probabilistic
    analysis. Aronov \etal \cite{abt-cbvdrt-08} prove that, if one
    makes certain realistic input assumptions on the terrain, this
    complexity is $\Theta(n + m \sqrt{n})$ in the worst case, where
    $n$ denotes the number of triangles that define the terrain and
    $m$ denotes the number of Voronoi sites. We prove that under a
    relaxed set of assumptions the Voronoi diagram has expected
    complexity $O(n+m)$, given that the sites have a uniform
    distribution on the domain of the terrain (or the surface of the
    terrain).  Furthermore, we present a worst-case construction of a
    terrain which implies a lower bound of $\Omega(n m^{2/3})$ on the
    expected worst-case complexity if these assumptions on the terrain
    are dropped.  As an additional result, we can show that the
    expected fatness of a cell in a random planar Voronoi diagram is
    bounded by a constant.
\end{abstract}

\section{Introduction}
\seclab{intro}

Voronoi diagrams on terrains are a basic geometric data-structure that
have a variety of applications in many areas: geographic information
science (\textsc{\si{gis}}) \cite{ajt-vbra-01, dg-tsvdgn-08,
   pa-cnmwvd-06, cg-frn-10}, robot motion planning \cite{ts-mpgvd-89},
mesh generation \cite{kwr-gvdps-97} and image analysis
\cite{sf-clssf-04, wdy-paadm-08} to name a few.  The geodesic Voronoi
diagram of point sites on a polyhedral terrain is a subdivision of the
surface into cells, according to the set of sites, such that every
cell contains exactly the surface points which are closest to the site
that is associated with the cell. Here, the distance is measured by
the length of the shortest path on the terrain. It is tempting to
believe that in practice -- that is, given that the terrain is
well-behaved -- the complexity of such a geodesic Voronoi diagram
should be linear, because of its similarity to the Euclidean Voronoi
diagram of point sites in the plane.

However, in the worst case, this complexity can be much higher, even
if one makes certain realistic assumptions on the shape of the
terrain.  Indeed, Aronov \etal \cite{abt-cbvdrt-08} show that the
worst-case complexity is $\Theta(n+m\sqrt{n})$ for a certain class of
well-behaved terrains, where $n$ is the number triangles that define
the terrain and $m$ is the number of Voronoi sites. This shows that
assuming realistic input indeed brings the complexity down, but it is
still far from being linear.  They conjecture that, in order to prove
a linear bound, one needs to make further assumptions on how the sites
are distributed.  Our purpose in this paper is to study the complexity
of a geodesic Voronoi diagram if we assume that the sites are being
chosen randomly from the terrain.

\paragraph{Previous work.}
Analyzing the expected complexity of geometric structures for random
inputs has a long history in computational geometry. See for instance
the work of \Renyi and Sulanke \cite{rs-udkhv-63} and Raynaud
\cite{r-slcdn-70} on the complexity of convex hulls of random
points. Weil and Wieacker give an overview of related results in
\cite{ww-sg-93}.  Naturally, Voronoi diagrams (and their counterpart,
Delaunay triangulations) have been analyzed in this way as they
are a fundamental data-structure, used in fields such as mesh
generation \cite{r-draqt-95}, surface reconstruction \cite{d-csra-11},
molecular biology, and many others. It is well-known that the size of
the Voronoi diagram of point sites in $\Re^3$ is quadratic in the
worst-case, however it is near-linear in most practical situations. To
address this dichotomy, people have investigated the complexity when
the point sites are: (i) generated by a random processes, (ii) well
spaced, (iii) have bounded spread, or (iv) were sampled from surfaces
according to curvature. See \cite{deg-eeg-08} and references therein
for more information on such work.  In particular, there is a vast
amount of work on \emph{Poisson Voronoi Diagrams}
(\emph{\PVD{}}). Here, the domain has a density associated with it
(say, the area). The probability of $n$ points to appear in an area of
measure $\mu$ has, as the name suggests, a Poisson distribution
parameterized by the area. Similarly, the distribution of points
selected into disjoint areas is independent. Poisson Voronoi diagrams
are used in many areas, such as physics, biology, animal ecology, and
others. See \cite{obsc-stcav-00, jn-osdpv-04} and references
therein. However, this work does not seem to have considered geodesics
at all.

In this paper, we are interested in the complexity of geodesic Voronoi
diagrams on polyhedral terrains.  Moet \etal \cite{mks-ort-08} were
the first to study this complexity using a set of parameterized
assumptions that describe realistic terrains. In this approach, one
assumes that a certain property, for example, the maximum slope of the
terrain, can be bounded by a constant independent of the input size.
This allows one to avoid certain worst-case configurations which are
highly unlikely to occur in practice.  Instead, the analysis is
confined to classes of well-behaved inputs and consequently this
method is described as using \emph{realistic input models}.  Moet also
did an experimental validation of the used parameters
\cite{m-ccvge-08} and confirmed that the parameters indeed behave like
constants on realistic terrains.  The realistic input models
introduced in this work have also been adopted by subsequent papers.
As such, Aronov \etal \cite{abt-cbvdrt-08} improved the bounds given
by Moet \etal and showed that
\begin{inparaenum}[(i)]
    \item the bisector between two sites has worst-case complexity
    $\Theta(n)$, (where $n$ denotes the number of triangles of the
    terrain) if the triangulation is low density and the lifted
    triangles have bounded slope; and
    \item that the worst-case combinatorial complexity of the Voronoi
    diagram is $\Theta(n+m\sqrt{n})$, (where $m$ denotes the number of
    sites) if in addition the triangles are of similar size and the
    aspect ratio of the domain is bounded.
\end{inparaenum}
The realistic assumptions made in these papers are described in more
detail in the next section.

Finally, note that Schreiber and Sharir \cite{ss-spcp-08} showed how
to compute an implicit representation of the geodesic Voronoi diagram
on the surface of a convex polyhedron, in time $O((n+m)\log (n+m))$,
so that the site closest to a query point can be reported in time
$O(\log (n+m))$. Schreiber \cite{s-sprp-07} also extended their method
for single-source shortest paths to the case of non-convex polyhedra
using several realistic input models.  Naturally, these analyses do
not inform about the complexity of the explicit Voronoi diagram.

\paragraph{Our results.}
We study the expected complexity of geodesic Voronoi diagrams on
terrains.  To this end, we use realistic input assumptions on the
terrain, and sample the sites uniformly at random from the domain of
the terrain. See \secref{prelims} for the exact definitions.  In
\secref{upper:bound} we show that under these assumptions the
complexity of the geodesic Voronoi diagram is indeed linear. That is,
we show that the complexity is bounded by $O(n+m)$, where $n$ is the
complexity of the terrain, and $m$ is the number of sites being
randomly picked.  The constants in the asymptotic analysis depend on
how well-behaved the terrain is, which is formalized using the input
models described in the next section. See \thmref{main} for the exact
result.  In \secref{lower:bound} we analyze the expected complexity if
these assumptions on the shape of the terrain are dropped. In
particular, in \thmref{lower:bound} we show a lower bound of
$\Omega\pth{nm^{2/3}}$.  This lower bound, in a sense, justifies the
input assumptions made previously, since it implies that the
randomness assumption by itself is not sufficient if we want the
geodesic Voronoi diagram to have a low complexity.  The construction
that leads to this lower bound is intricate and requires a careful
balancing of the variance of the distances of the sampled sites, and
how closely they can be packed together.  Furthermore, in
\apndref{fat:cells} we show that in expectation the Voronoi cells
generated by a uniform sample in the plane are in expectation fat;
that is, they are nicely behaved in some sense.


\paragraph{Organization.}
In \secref{prelims} we introduce the concepts used.
\secref{upper:bound} contains the poofs of the upper bound on the
complexity of the Voronoi diagram. The lower bound is proved in
\secref{lower:bound}. Finally, we conclude in \secref{conclusions}.
The expected fatness of a Voronoi cell is analyzed in
\apndref{fat:cells}.

\section{Preliminaries}
\seclab{prelims}

\subsection{Voronoi Diagrams on Terrains}

A polyhedral terrain $\Terrain$ is defined by a triangulation
$\Triangulation$ of $n$ vertices $\Vertices$ in $\Re^2$, a convex
domain $\Domain \subseteq \Re^2$ which contains $\Vertices$, and a
height function on these vertices.  The \emphi{surface} of the terrain
is defined by the triangles of $\Triangulation$ lifted according to
this height function. We refer to $\Terrain$ simply as a
\emphi{terrain} and we denote the set of edges of the triangulation
with $\Edges$. For simplicity of exposition we restrict our discussion
to the case where $\Domain$ is the unit square, however, our results
can be easily extended to the more general case of convex regions with
bounded aspect ratio.  For two points $\pntA,\pntB \in \Domain$, we
denote their euclidean distance in the $(x,y)$-plane with
$\distX{\pntA}{\pntB}$.  When $\pntA$ and $\pntB$ are lifted to the 
surface of $\Terrain$, we define their geodesic distance to be the length of 
the shortest path connecting them that is constrained to lie in the 
surface of $\Terrain$, and we denote this value by 
$\distGeo{\pntA}{\pntB}$.

The (geodesic) \emphi{Voronoi diagram} of a set of $m$ points on
$\Terrain$ (which are called \emphi{sites}) is a planar subdivision of
the surface of $\Terrain$, where every cell of the subdivision is
associated with exactly one site, and such that for any point in the
cell the associated site is the closest site, where the distances are
measured using the geodesic distance.  We denote
the Voronoi diagram with $\VD{\Sites}$, where $\Sites$ denotes the set
of sites, and we call a cell of the subdivision a \emphi{Voronoi
   cell}.  The \emphi{bisector} between two sites $\pntA$ and $\pntB$
on the surface of $\Terrain$ is defined as the set of points $\pnt$,
such that $\pnt$ has the same distance to $\pntA$ and $\pntB$.  The
Voronoi diagram can be represented as the structured set of curves and
straight lines which delineate the Voronoi cells and which are subsets
of the bisectors between these points.  We call a point which is
incident to at least three cells a \emphi{Voronoi vertex} and we call
each maximally connected subset of the bisector incident to two
Voronoi cells a \emphi{Voronoi edge} (note that two cells can have
multiple edges between them).  Usually one assumes general position of
the sites so that no two sites are equidistant from a terrain vertex,
which ensures that bisectors are one-dimensional and that the Voronoi
cells subdivide the terrain surface without overlap, see also
\cite{abt-cbvdrt-08}.  In our case the sites are randomly
sampled, and so general position is implied.

Since a terrain $\Terrain$ is defined by a height function over a
domain $\Domain$, there is a natural bijection between points of
$\Terrain$ and points of $\Domain$.  Hence, the various objects
defined in the previous paragraph can be viewed either in $\Terrain$
or in $\Domain$.  Generally in the paper we shall refer to these
objects by their projection in $\Domain$, unless otherwise stated.

\subsection{Input Model}

The main idea of \emphi{realistic input models} is to parametrize
certain properties of the input, which are suspected to capture
contrived configurations leading to high complexities or running
times. In cases where there exists a high discrepancy between the
theoretical bounds and the complexities observed in practice, it is
often useful to analyze the complexities not only as a function of the
input size, but also with respect to these parameters.  This sometimes
leads to more informative asymptotic bounds.  As such, the realistic
input assumptions do not only distinguish between ``good'' and ``bad''
input, instead they enable a more differentiated view on which inputs
are 'better' or 'worse'.

In this paper, we use the following realistic input model.  A set of
line segments is $\ldConst$-\emphi{low density} if and only if the
number of edges that intersect an arbitrary ball, which are longer
than the radius of the ball, is smaller than $\ldConst$.  Low density
has been used in the analysis of many different geometric problems,
see \cite{bksv-rimga-02} for an overview.

To model a \emphi{realistic terrain} we adopt the realistic
assumptions made in \cite{abt-cbvdrt-08}.  According to these
assumptions, there exist constants $\ldConst$ and $\slConst$
independent of $n$, such that
\begin{compactenum}[\qquad(i)]
    \item the set of edges of the triangulation, is a
    $\ldConst$-\emph{low density} set, and
    \item any line segment embedded in the lifted triangulation has
    slope at most $\slConst$.
\end{compactenum}

We now state some useful facts that follow from these assumptions.
For notational ease in the rest of the paper we define the constant
$\slFactor = \sqrt{1+\slConst^2}$.

First, the number of pairs of objects from two low density sets, which
intersect each other is linear in the total number of objects in these
sets.  This can be easily verified, the idea is to charge each
intersecting pair to the smaller of the two objects.

\begin{fact}[\cite{abt-cbvdrt-08}]
    Let $A$ be a set of $n$ objects with $\lambda$-low density and let
    $B$ be a set of $m$ objects with $\phi$-low density, then the
    number of pairs of objects $(u,v) \in A \times B$, such that $u$
    intersects $v$ is $O(\lambda m + \phi n)$.
    
    \factlab{l:d:intersections}
\end{fact}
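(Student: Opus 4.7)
The plan is to establish the bound by a simple charging argument: for every intersecting pair $(u,v) \in A \times B$, charge the pair to whichever of $u$, $v$ has the smaller diameter, and then use the low-density hypothesis on the opposite set to bound the number of charges each object can receive.

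To make this precise, I would let $|o|$ denote the diameter of an object $o$ and write $b(o)$ for its smallest enclosing ball, which has radius at most $|o|$. Given an intersecting pair $(u,v)$, assume without loss of generality that $|u| \leq |v|$. Since $u \subseteq b(u)$ and $v$ meets $u$, the object $v$ also meets $b(u)$; moreover $|v| \geq |u|$ is at least the radius of $b(u)$, so $v$ qualifies as a ``long'' object with respect to this ball. The $\phi$-low-density of $B$ then guarantees that at most $\phi$ objects of $B$ can play this role, so each $u \in A$ collects at most $\phi$ charges of this type, contributing at most $\phi n$ charges overall. The symmetric case $|v| < |u|$ is handled by charging the pair to $v$, and the $\lambda$-low-density of $A$ gives at most $\lambda m$ charges of that type. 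Summing the two cases yields the claimed bound $O(\lambda m + \phi n)$.

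The main subtlety is that the paper's definition of low density is phrased for line segments, whereas the fact speaks of general objects; the first step is therefore to adopt the natural extension that a family is $\lambda$-low-density if every ball of radius $r$ meets at most $\lambda$ members of diameter at least $r$. Beyond this bookkeeping, the argument is pure combinatorial double counting with no geometric surprises, which is consistent with the paper's remark that the fact is easily verified.
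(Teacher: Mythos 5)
Your proposal is correct and follows exactly the charging argument the paper sketches in one sentence before stating the fact (``charge each intersecting pair to the smaller of the two objects''), with the low-density hypothesis on the opposite family bounding the charges per object; the paper itself does not spell out the proof further, deferring to the cited reference, so you have simply filled in the details of the intended route. The point you raise at the end---that low density must be read for general objects via diameter rather than only for segments---is a reasonable clarification and does not change the substance of the argument.
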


Second, since the slope is bounded by a constant, the geodesic
distance is the same as the euclidean distance up to a constant
factor, and similarly, geodesic disks have an area that is
approximately the same as that of planar disks in this case.

\begin{fact}[\cite{abt-cbvdrt-08}]
    For any two points $\pntA,\pntB \in \Domain$, we have that
    $\distX{\pntA}{\pntB} \leq \distGeo{\pntA}{\pntB} \leq \slFactor
    \distX{\pntA}{\pntB}$.
    
    \factlab{geo:dist}
\end{fact}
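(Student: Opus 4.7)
The plan is to establish the two inequalities separately via straightforward comparison arguments between the straight planar segment and paths on the terrain.

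For the lower bound $\distX{\pntA}{\pntB} \le \distGeo{\pntA}{\pntB}$, I would invoke the fact that the vertical projection $\pi \colon \Re^3 \to \Re^2$ onto the $(x,y)$-plane is $1$-Lipschitz, i.e., it never increases arc length. For any rectifiable path $\gamma$ on the surface of $\Terrain$ connecting the lifts of $\pntA$ and $\pntB$, the projected path $\pi(\gamma)$ lies in $\Domain$, runs from $\pntA$ to $\pntB$, and has length at most that of $\gamma$. Since the straight segment realizes the minimum planar path length between $\pntA$ and $\pntB$, we obtain $\distX{\pntA}{\pntB} \le \text{length}(\gamma)$; taking the infimum over $\gamma$ yields the inequality.

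For the upper bound $\distGeo{\pntA}{\pntB} \le \slFactor \distX{\pntA}{\pntB}$, I would exhibit an explicit path on $\Terrain$ of length at most $\slFactor \distX{\pntA}{\pntB}$. Let $\seg$ be the straight segment from $\pntA$ to $\pntB$ in $\Domain$, and subdivide it at its crossings with the projected edges of $\Triangulation$, obtaining consecutive sub-segments $\seg_1,\ldots,\seg_k$, each lying inside the projection of a single triangle. Lifting each $\seg_i$ to the corresponding triangle of $\Terrain$ yields a line segment embedded in the lifted triangulation; the concatenation of these lifted pieces is a piecewise-linear curve from the lift of $\pntA$ to the lift of $\pntB$, whose length upper-bounds $\distGeo{\pntA}{\pntB}$. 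By terrain assumption $(ii)$, each lifted $\seg_i$ has slope at most $\slConst$, so its length is at most $\sqrt{1+\slConst^2} = \slFactor$ times the planar length of $\seg_i$. Summing over $i$ then gives $\distGeo{\pntA}{\pntB} \le \slFactor \sum_i \text{length}(\seg_i) = \slFactor \distX{\pntA}{\pntB}$.

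Neither step presents a real obstacle. The only point deserving some care is the subdivision in the upper bound: since assumption $(ii)$ applies only to line segments genuinely embedded in the lifted triangulation, the lift of $\seg$ across multiple triangles must first be broken at triangle boundaries before the per-segment slope bound can be invoked. Everything else reduces to the standard observation that vertical projection contracts distances, while slope-bounded lifting stretches them by a factor of at most $\slFactor$.
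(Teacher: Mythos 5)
Your proof is correct, but note that the paper itself does not prove this statement; it is imported as a black box from Aronov \etal\ \cite{abt-cbvdrt-08} and stated without argument. So there is no proof in the paper against which to compare. That said, your argument is the standard one and is sound: the lower bound follows because the vertical projection onto the $(x,y)$-plane is $1$-Lipschitz, so the projection of any surface path from $\pntA$ to $\pntB$ is a planar path of no greater length, and the straight segment is the shortest such; the upper bound follows by lifting the straight segment, subdividing it at the projected triangle edges so that each piece is a genuine line segment embedded in a single lifted triangle, and then applying the slope bound to get a per-piece stretch factor of at most $\sqrt{1+\slConst^2} = \slFactor$. Your explicit remark that assumption (ii) only applies after subdividing at triangle boundaries is exactly the right point of care.
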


\begin{lemma}
    Let $D$ be a geodesic disk of radius $r$ on the surface of a
    terrain with bounded slope $\slConst$ and let $A$ denote its area.
    We have that $\pi (r/\slFactor)^2 \leq A \leq \pi \slFactor r^2$.
    
    \lemlab{geo:disk:area}
\end{lemma}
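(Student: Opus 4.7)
The plan is to pass to the vertical projection $D' = \{\pntA \in \Domain \mid \distGeo{\pntA}{\pnt} \leq r\} \subseteq \Domain$ of the geodesic disk $D$ (where $\pnt$ is its center), bound its planar area using Fact~\ref{fact:geo:dist}, and then translate between planar area and surface area using the slope bound.

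For the \emph{containment step}, I would apply Fact~\ref{fact:geo:dist} twice. Any point $\pntA \in \Domain$ with $\distX{\pntA}{\pnt} \leq r/\slFactor$ satisfies $\distGeo{\pntA}{\pnt} \leq \slFactor\,\distX{\pntA}{\pnt} \leq r$, so $D'$ contains the planar disk of radius $r/\slFactor$ centered at $\pnt$. Conversely, any $\pntA \in D'$ satisfies $\distX{\pntA}{\pnt} \leq \distGeo{\pntA}{\pnt} \leq r$, so $D'$ is contained in the planar disk of radius $r$ centered at $\pnt$. Hence the planar (Lebesgue) area of $D'$ lies in the interval $[\pi(r/\slFactor)^2,\, \pi r^2]$.

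For the \emph{lifting step}, I would use the bounded-slope assumption: since every line segment embedded in the lifted triangulation has slope at most $\slConst$, each terrain triangle lies in a plane that makes an angle $\theta$ with the horizontal satisfying $\tan\theta \leq \slConst$, and therefore $\sec\theta \leq \sqrt{1+\slConst^2} = \slFactor$. The vertical projection from such a triangle onto $\Domain$ has Jacobian equal to $\cos\theta \in [1/\slFactor,\, 1]$, so for any measurable region $R$ on the surface whose vertical projection is $R' \subseteq \Domain$,
$$\text{area}(R') \;\leq\; \text{area}(R) \;\leq\; \slFactor \cdot \text{area}(R').$$
Applying this to $D$ and summing over the (finitely many) triangles of $\Triangulation$ that $D$ meets yields $\text{area}(D') \leq A \leq \slFactor \cdot \text{area}(D')$, which together with the containment step gives $\pi(r/\slFactor)^2 \leq A \leq \pi\slFactor\, r^2$.

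I do not expect a real obstacle; the only minor subtlety is that $D$ may be partitioned into many pieces by edges of $\Triangulation$, but because the per-triangle distortion factor $\sec\theta$ is uniformly bounded by $\slFactor$, summing the piecewise bounds preserves the inequality.
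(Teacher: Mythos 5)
Your proposal is correct and follows essentially the same approach as the paper's proof: sandwich the projection of the geodesic disk between two planar disks of radii $r/\slFactor$ and $r$ using Fact~\factref{geo:dist}, and then relate planar area to surface area via the per-triangle lifting factor bounded by $\slFactor$. The paper organizes this slightly differently (bounding $A$ by the area of the lifted terrain over the outer planar disk, and bounding it below by the area of the lifted terrain over the inner planar disk, which it shows is contained in $D$), but the containments and the area-distortion bound you invoke are the same ones the paper uses, so the two proofs are in substance identical.
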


\begin{proof}
    Let $c$ be the center of $D$ and let $c_p$ be its projection.  Let
    $D_{r/\slFactor}$ and $D_r$ be the planar disks with center $c_p$
    and radius $r/\slFactor$ and $r$, respectively.  Let
    $T_{r/\slFactor}$ and $T_r$ denote the portion of the terrain that
    lies directly above $D_{r/\slFactor}$ and $D_r$, respectively.
    
    Clearly the projection of $D$ is contained in $D_r$ and so if we
    can bound the area of $T_r$ then this will bound the area of $D$.
    Observe that $D_r$ consists of a set of triangles from
    $\Triangulation$, which have been clipped at the boundary of
    $D_r$.  It is a well-known fact that if we lift any such (clipped)
    triangle up to the terrain then its area can increase by at most a
    factor of $\slFactor$.  Therefore the total area of $T_r$ is at
    most $\pi \slFactor r^2$.
    
    Now consider the disk $D_{r/\slFactor}$.  First observe that
    $T_{r/\slFactor}$ must have area at least as large as that of
    $D_{r/\slFactor}$.  Second, note that $D$ must contain
    $T_{r/\slFactor}$ (since by \factref{geo:dist} the distance
    between any point $T_{r/\slFactor}$ and $c$ is at most
    $\slFactor(r/\slFactor) = r$).  Therefore the area of $D$ is at
    least $\pi(r/\slFactor)^2$.
\end{proof}

\subsection{Complexity of the Voronoi Diagram}
\seclab{def:complexity}

The complexity of the Voronoi diagram is measured by the complexity of
the structured set of curves and line segments that delineate the
Voronoi cells.  This set consists of pieces of bisectors and it can be
characterized as follows.  Again, we adopt the definitions used in
\cite{abt-cbvdrt-08}.

For most of the points on a bisector, the shortest path to either site
will be unique.  If the shortest path is not unique, we call $\pnt$ a
\emphi{breakpoint}.  The breakpoints partition the bisector into a set
of curved pieces which we call \emphi{\chords.} The
\emphi{combinatorial complexity} of the Voronoi diagram is now defined
as the sum of
\begin{inparaenum}[(i)]
    \item the number of Voronoi vertices,
    \item the number of breakpoints of Voronoi edges, and
    \item the number of intersections of the \chords of Voronoi edges
    with the triangulation of the terrain.
\end{inparaenum}

We continue with some useful facts and lemmas used in the analysis of
the complexity.  First, it was observed by Moet \etal that the number
of breakpoints of the Voronoi diagram is bounded by $n$, since each of
them can be attributed to a terrain vertex.  To see why this is true,
imagine walking along the bisector while sweeping the shortest paths
to either site from the current position. The points on the bisector,
where the edge sequence of the shortest path changes, are the
breakpoints.

\begin{fact}[\cite{mks-ort-08}]
    Given a terrain $\Terrain$ which is defined by a triangulation
    with $n$ vertices, the number of breakpoints of any Voronoi
    diagram on $\Terrain$ is smaller than or equal to $n$.

    \factlab{breakpoints}
\end{fact}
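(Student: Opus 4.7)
The plan is to construct an injection from the set of breakpoints of $\VD{\Sites}$ into the vertex set $\Vertices$ of the triangulation defining $\Terrain$. The key structural ingredient I would rely on is the classical fact that a geodesic shortest path on a polyhedral surface is a polyline whose interior bending points are all terrain vertices; equivalently, the combinatorial type of a shortest path $\pi$ is encoded by the sequence of vertices at which it bends (together with the sequence of terrain edges it crosses between consecutive bends, which is determined by straight-line unfoldings).

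First, I would characterize breakpoints combinatorially. A point $\pnt$ on the bisector of two sites $\pntA, \pntB$ is a breakpoint precisely when the shortest path from $\pnt$ to at least one of the sites, say $\pntA$, admits two distinct combinatorial realizations of equal length. These two realizations must agree along a common portion of their edge sequences and differ by passing some specific terrain vertex $v(\pnt)$ on opposite sides. Following the sweeping intuition indicated in the excerpt, as $\pnt$ moves along the bisector its shortest path to $\pntA$ deforms continuously except at breakpoints, where a single terrain vertex enters or leaves the edge sequence. This lets me define a map $\phi$ from breakpoints to $\Vertices$ by sending each breakpoint to the terrain vertex responsible for the change.

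Next, I would show that $\phi$ is injective. Here the tree structure of geodesics from a single source does the work: the shortest path tree $T(\pntA)$ rooted at $\pntA$, restricted to $\pntA$'s Voronoi cell, partitions the cell into combinatorial strata on which the edge sequence of the shortest path to $\pntA$ is constant, with the strata boundaries being the ``wavefront curves'' issuing from terrain vertices. A breakpoint attributed to $v$ on a Voronoi edge bounding $\pntA$'s cell is exactly where the wavefront emanating from $v$ exits $\pntA$'s cell. Since a shortest path from any point in a cell to its associated site stays inside the cell, each terrain vertex $v$ is ``active'' as a bending vertex only inside the Voronoi cell of the site geodesically closest to it, and its wavefront contributes at most one exit point along the cell boundary.

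The main obstacle is this last injectivity step: one must rigorously rule out the possibility that a single terrain vertex is responsible for breakpoints on several distinct Voronoi edges, or on two separate pieces of the same Voronoi edge. This requires a careful tree-based accounting of how the shortest path map of a site attaches to the boundary of its Voronoi cell, together with the observation that each terrain vertex is owned by a unique cell; the remainder of the argument is then a summation over sites that yields the bound $n$.
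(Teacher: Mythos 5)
Your plan --- attribute each breakpoint to the terrain vertex at which the shortest-path edge sequence changes, and then argue this attribution is injective --- is exactly the sweeping intuition the paper itself offers in the paragraph that introduces the fact, and the paper never gives more than that: the statement is imported wholesale from Moet~\etal\ \cite{mks-ort-08}, so there is no in-paper proof to compare against. In that sense you have reproduced the paper's reasoning.

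That said, the gap you flag at the end is real, and it is the entire content of the fact. The bound is $n$ with no hidden constant, so the map $\phi$ you define must send distinct breakpoints to distinct terrain vertices across \emph{all} Voronoi edges and \emph{all} pairs of sites simultaneously. Your paragraph asserting that each vertex $v$ is ``owned'' by the unique Voronoi cell containing it, and that its wavefront ``contributes at most one exit point along the cell boundary,'' is the claim that does all the work, and as written it is an assertion, not an argument. To close it you would need, at minimum, two pieces: (a) that the ridge of the shortest-path map of $\site$ emanating from $v$ is a single connected curve along which distance to $\site$ is strictly increasing (this controls whether $v$ spawns one ridge or several); and (b) that such a curve meets the boundary of $\site$'s Voronoi cell in at most one point, which does not follow merely from star-shapedness of the cell with respect to $\site$, since the ridge does not emanate from $\site$. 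You have correctly located the hard step, but you have not carried it out; what you have is a proof outline whose crux is still open, which matches the paper's own level of detail but does not constitute a proof of the fact.
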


Furthermore, we will use the following result by Aronov \etal
\cite{abt-cbvdrt-08}.

\begin{lemma}[\cite{abt-cbvdrt-08}]
    Given two points $\pntA$ and $\pntB$, the set of \chords that form
    the bisector of $\pntA$ and $\pntB$ on $\Terrain$ (projected to
    the $(x,y)$-plane) is $O(\slConst)$-low density.

    \factlab{l:d:bisector}
\end{lemma}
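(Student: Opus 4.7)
The plan is to first use the unfolding technique for shortest paths to represent each chord, when projected to the $(x,y)$-plane, as a line segment, and then to argue by a direction-counting argument that only $O(\slConst)$ such long segments can cross any given disk.

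For the first step, I would exploit that along a single chord the combinatorial types of the shortest paths from chord points to $\pntA$ and to $\pntB$ are constant: any change in edge sequence would produce a point where the shortest path is not unique, i.e.\ a breakpoint, which would terminate the chord. Applying the standard planar unfolding to the fixed edge sequence to $\pntA$ yields an image point $\pntA^*$ in the plane of the lifted triangle that carries the chord, with $\distGeo{\pnt}{\pntA} = \|\pnt - \pntA^*\|$ for every chord point $\pnt$, and similarly an image $\pntB^*$. Hence the chord is contained in the Euclidean perpendicular bisector of $\pntA^*$ and $\pntB^*$ inside the lifted triangle, which is a straight line. Since the triangle is planar with slope at most $\slConst$, projecting this line to the $(x,y)$-plane by the affine map induced by the triangle leaves it a line segment. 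Consequently, the chords projected to $\Domain$ form a collection of line segments, to which the low-density definition applies directly.

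For the low-density count, fix an arbitrary disk $D$ of radius $r$ in the $(x,y)$-plane and consider the family $\Family$ of projected chord segments of length at least $r$ that intersect $D$. Each chord in $\Family$ is perpendicular, in its lifted triangle, to the vector $\pntA^* - \pntB^*$, so its direction in the $(x,y)$-plane is determined up to the slope distortion bounded by $\slFactor = \sqrt{1+\slConst^2}$. Two chords in $\Family$ arising from the same unfolding pair $(\pntA^*, \pntB^*)$ coincide, so distinct members of $\Family$ must correspond to distinct unfoldings and thus, up to the bounded-slope distortion, to distinct projected directions. A packing argument then shows that only $O(\slConst)$ line segments with pairwise separated directions can have length $\geq r$ while simultaneously crossing a disk of radius $r$, yielding $|\Family| = O(\slConst)$.

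I expect the main obstacle to be the final angular packing step: turning the intuition that distinct unfoldings yield sufficiently separated projected bisector directions into a rigorous $O(\slConst)$ bound. An appealing alternative route is a direct sweeping argument that avoids directly counting directions: as one traverses a long chord through $D$, the geodesic disks of equal radii around $\pntA$ and $\pntB$ must sweep in lockstep across $D$, and the bounded-slope assumption, via \factref{geo:dist} and \lemref{geo:disk:area}, caps how many such synchronized sweeps can coexist in a neighborhood of $D$ before the distance estimates collide. Either route leans critically on the bounded-slope facts already stated in the excerpt, which convert geodesic estimates to planar ones up to $\slFactor$ factors and are what ultimately produce the $\slConst$ in the low-density bound.
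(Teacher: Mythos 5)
The paper attributes this lemma to Aronov~\etal~\cite{abt-cbvdrt-08} and does not supply a proof of its own, so there is no internal argument to compare against; the assessment below is of your proposal on its own terms.

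The central gap is in your first step. You claim that each chord, projected to the $(x,y)$-plane, is a line segment because the combinatorial type of the two shortest paths is fixed along the chord, and you place the resulting perpendicular bisector ``inside the lifted triangle that carries the chord.'' But a chord is not confined to a single face of the terrain. The paper's own complexity measure separately counts ``the number of intersections of the \chords{} of Voronoi edges with the triangulation,'' and the worst-case bound $\Theta(n + m\sqrt{n})$ from \cite{abt-cbvdrt-08} would be impossible if each chord lay in one triangle: since the number of breakpoints is at most $n$ (\factref{breakpoints}) and the number of Voronoi edges is $O(m)$ (\lemref{nr:V:e:v}), there are only $O(n+m)$ chords in total, and one-triangle chords would give only $O(n+m)$ chord--triangulation crossings. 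What is true is that the chord is a straight segment in the \emph{unfolded} plane obtained by flattening out the entire sequence of terrain triangles it traverses; refolding and then projecting vertically produces a \emph{polyline} that bends at every triangulation edge the chord crosses. So the reduction to planar line segments does not hold, and the low-density statement has to be read for these polylines (using their diameters), which is a materially different object than your argument treats.

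Your second step has a separate, acknowledged gap. ``Distinct unfoldings give distinct directions'' does not furnish any quantitative angular separation, and the packing claim that only $O(\slConst)$ segments of length $\geq r$ with pairwise separated directions can cross a disk of radius $r$ is not correct as stated: the count would be governed by the magnitude of the separation, not by $\slConst$. The more promising route is the one you gesture at in your last paragraph: every chord, in the unfolded plane, is perpendicular to the fixed vector $\pntA^* - \pntB^*$, and the slope bound limits how much vertical projection can tilt it, so the projected bisector is nearly monotone in the projected $\pntA\pntB$-direction up to $\slFactor$-factors; combined with the fact that the bisector is a single non-self-crossing curve, this caps the number of long pieces entering any fixed disk. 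That argument would need to be carried out carefully and is not present in your writeup.
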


We remark that Aronov \etal use this result to show that the bisector
has linear complexity.  However, note that this result does not imply
that the overall set of \chords of the Voronoi diagram is low density,
which would imply a linear complexity for the whole diagram.  Consider
for example the situation, where all the sites lie close to each other
on a straight line, and all the triangles of the terrain surface are
coplanar.  In this example, the bisectors are pairwise parallel lines,
which extend from one side of the domain to the other and could
therefore lead to a quadratic complexity Voronoi diagram by
intersecting many triangles of the terrain.

Finally, we observe that the number of Voronoi vertices and edges is
linear in the number of sites, as the following lemma and corollary
testify.  This fact was also observed by Aronov \etal, we include an
independent proof for the sake of completeness.

\begin{lemma}
    Let $\Terrain$ be a terrain and let $\Sites$ be a set of $m$
    points.  Then the number of Voronoi edges and Voronoi vertices of
    $\VD{\Sites}$ is $O(m)$.
    
    \lemlab{nr:V:e:v}
\end{lemma}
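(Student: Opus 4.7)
The plan is to apply Euler's formula to the planar subdivision obtained by superimposing $\VD{\Sites}$ on the boundary $\partial \Domain$. The essential preliminary step is to verify that every Voronoi cell is connected, so that the number of bounded faces of this subdivision is exactly $m$; once that is in hand, a standard degree-counting computation produces the bound.

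For the connectivity, I would pick any point $\pnt$ in the cell of a site $\pntA \in \Sites$ and consider the shortest geodesic path on $\Terrain$ from $\pnt$ to $\pntA$. For any intermediate point $\pntC$ on this path and any other site $\pntB \in \Sites$, combining $\distGeo{\pntC}{\pntA} = \distGeo{\pnt}{\pntA} - \distGeo{\pnt}{\pntC}$ with the triangle inequality $\distGeo{\pntC}{\pntB} \geq \distGeo{\pnt}{\pntB} - \distGeo{\pnt}{\pntC}$ and the defining inequality $\distGeo{\pnt}{\pntB} \geq \distGeo{\pnt}{\pntA}$ yields $\distGeo{\pntC}{\pntB} \geq \distGeo{\pntC}{\pntA}$. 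Hence the whole geodesic lies inside the cell of $\pntA$, which is therefore path-connected (indeed, geodesically star-shaped around $\pntA$).

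Having established that the subdivision has exactly $m$ internal faces, I would split its vertex set into three groups: the $V_1$ genuine Voronoi vertices (each of degree at least three with respect to Voronoi edges), the $V_2$ points where a Voronoi edge terminates on $\partial \Domain$ (each with exactly one incident Voronoi edge), and the four corners of $\Domain$; similarly the edges split into the $E_V$ Voronoi edges and the $V_2+4$ segments of the boundary. Euler's formula then reduces to $E_V = V_1 + m - 1$. A double-count of the endpoints of the $E_V$ Voronoi edges gives $2 E_V \geq 3 V_1 + V_2$, and combining the two relations yields $V_1 + V_2 \leq 2m - 2$, so that $V_1$, $V_2$, and $E_V$ are all $O(m)$.

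The step most in need of care is the connectivity of cells, since geodesic structures on a general terrain can behave counterintuitively; it is reassuring that connectivity drops out of the triangle inequality alone and does not require any of the realistic input assumptions. Once connectivity is in hand, the Euler-formula computation is routine, and the same degree-counting inequality bounds the number of boundary-intersection points $V_2$ without needing a separate argument for how Voronoi edges interact with $\partial \Domain$.
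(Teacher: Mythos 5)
Your proof is correct in spirit and takes essentially the same route as the paper: both arguments rest on establishing that each Voronoi cell is connected and then invoking Euler's formula together with the fact that every Voronoi vertex has degree at least three. The difference is cosmetic rather than structural -- you apply Euler to the primal planar subdivision $\VD{\Sites} \cup \partial\Domain$, while the paper applies it to the dual graph whose vertices are the cells and whose faces are the Voronoi vertices.

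There is, however, a gap you should close before calling the Euler step ``routine.'' Your connectivity argument shows cells are path-connected, but not simply connected, and the paper explicitly warns that a cell can have a hole (a site on a mountaintop whose cell is a disk entirely enclosed by another cell). In that situation the Voronoi edge between the two cells is a closed curve that meets no Voronoi vertex and never reaches $\partial\Domain$, so it does not fit your partition of endpoints into the $V_1$, $V_2$, and corner classes; moreover, the arcs of the diagram then form a \emph{disconnected} planar graph, so the naive $V - E + F = 2$ no longer applies. This does not destroy the bound -- subdivide each such loop with a dummy vertex and use $V - E + F = 1 + C$ for a graph with $C$ components: each loop adds one to both the vertex count and the component count, so the extra terms cancel and you still obtain $E_V = V_1 + m - 1$, and the degree count $2E_V \geq 3V_1 + V_2 + 2(\text{number of loops})$ only strengthens your inequality -- but the accounting deserves to be made explicit since it is precisely the degenerate phenomenon the paper goes out of its way to flag (in the dual, it shows up as bridge and multi-edges, which the paper likewise discusses).

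Two further small points. First, your degree count implicitly uses that no two Voronoi edges share a boundary endpoint and that every Voronoi vertex has degree exactly counted; this holds under the general-position assumption, which is legitimate here since the paper notes general position follows from random sampling, but it is worth saying. Second, your inequality $\distGeo{\pntC}{\pntB} \geq \distGeo{\pnt}{\pntB} - \distGeo{\pnt}{\pntC}$ is just the triangle inequality rewritten, so the connectivity argument is fine and matches the paper's one-line computation.
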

   
\begin{proof}
    For $m \leq 2$ the claim is clearly true, since we have at most one
    bisector, which contributes exactly one Voronoi edge.  For $m > 2$, we argue
    as follows.
    
    First, observe that the cells in this Voronoi diagram are
    connected. Indeed, consider a point $\pnt$ that belongs to the
    interior of the cell of $\site \in \Sites$. Consider the shortest
    path $\pi$ from $\pnt$ to $\site$, and consider any point $\pntA
    \in \pi$. If $\pntA$ is closer to some other site $\siteA$ than to
    $\site$, then we have that
    \begin{align*}
        \distGeo{\pnt}{\site} = %
        \distGeo{\pnt}{\pntA} + \distGeo{\pntA}{\site}%
        \geq \distGeo{\pnt}{\pntA} + \distGeo{\pntA}{\siteA}%
        \geq \distGeo{\pnt}{\siteA},
    \end{align*}
    but this is a contradiction to $\pnt$ being in the interior of the
    cell of $\site$.
    
    Now, consider the dual graph $\Graph$ of the graph formed by the
    Voronoi vertices and Voronoi edges. In this graph, every vertex
    corresponds to a Voronoi cell and every face corresponds to a
    Voronoi vertex.  Note that we can derive a geometric embedding of
    this graph by using the sites as vertices and picking an arbitrary
    point on each Voronoi edge and connecting it by its shortest path
    to either site to form a (possibly curved) edge between two
    vertices.
    
    It is well-known that a cell in the Voronoi diagram might not be
    simply connected. Indeed, consider a mountain surrounded by a
    plane. If we place a site $\site$ on the top of the mountain, and
    a site $\siteA$ at the bottom of the mountain (and the mountain
    slope is large enough) then the Voronoi cell of $\site$ would be
    completely surrounded by the cell of $\siteA$ and thus would
    create a 'hole' in this cell. In $\Graph$, the two vertices that
    correspond to the cells of $\site$ and $\siteA$ would be connected
    by an edge, which is incident to only one face in $\Graph$.
    
    Furthermore, it is known that the dual graph can have multiple
    edges between two sites. To see this, again, place $\site$ on the
    top of the mountain and place two sites $\siteA$ and $\siteB$ at
    the bottom, such that the bisector between $\siteA$ and $\siteB$
    intersects the mountain.  The boundary between the cells of
    $\siteA$ and $\siteB$ would contain two Voronoi edges from the
    same bisector.
    
    However, the dual graph $\Graph$ is planar and connected and as
    such its Euler characteristic is $2$. Hence $v-e+f=2$, where $e$
    denotes the number of edges, $f$ the number of faces and $v$ the
    number of vertices of $\Graph$.
        
    Now, by definition, every Voronoi vertex is incident to at least
    three Voronoi cells. Therefore, every face of $\Graph$ is incident
    to at least three edges of $\Graph$. Since $\Graph$ is planar,
    every edge of $\Graph$ is incident to at most two faces of
    $\Graph$.  Hence, we have that $3f \leq 2e$, which implies that
    $3f \leq 2(v+f-2)$, and therefore it holds that $f \leq 2v - 2 =
    2m-2$.  It follows that the number of Voronoi vertices is in
    $O(m)$.  Applying Euler's formula again, we obtain that also the
    number of Voronoi edges is in $O(m)$.
\end{proof}

\begin{corollary}
    Let $\Terrain$ be a terrain and let $\Sites$ be a set of $m$
    points.  Let $\Square$ be a sub-square which intersects $k$
    Voronoi cells in their projection.  The number of Voronoi edges
    which intersect $\VD{\Sites} \cap \Square$ in their projection is
    in $O(k)$.
    
    \corlab{nr:V:e:cell}
\end{corollary}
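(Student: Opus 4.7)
The plan is to reduce the statement to an application of \lemref{nr:V:e:v} by restricting attention to the sites relevant to $\Square$. Let $S_\Square \subseteq \Sites$ denote the set of $k$ sites whose projected Voronoi cells meet $\Square$, and consider the auxiliary Voronoi diagram $\VD{S_\Square}$ obtained from these $k$ sites alone on the same terrain.

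The crucial observation is that inside $\Square$ the two diagrams $\VD{\Sites}$ and $\VD{S_\Square}$ induce identical planar subdivisions. For any point $\pnt \in \Square$, the site of $\Sites$ closest to $\pnt$ is the site of the Voronoi cell containing $\pnt$, which by definition of $S_\Square$ lies in $S_\Square$; hence the closest-site assignment inside $\Square$ is the same in both diagrams. Likewise, any Voronoi vertex of $\VD{\Sites}$ located inside $\Square$ is incident to three Voronoi cells, each of which has points arbitrarily close to the vertex and thus meets $\Square$; all three associated sites therefore belong to $S_\Square$, and the vertex is also a Voronoi vertex of $\VD{S_\Square}$. Consequently, the Voronoi edges of $\VD{\Sites}$ whose projections meet $\Square$ coincide, inside $\Square$, with those of $\VD{S_\Square}$.

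Next I would apply \lemref{nr:V:e:v} to $\VD{S_\Square}$, which has $k$ sites, obtaining that its number of Voronoi edges is $O(k)$. Since every Voronoi edge of $\VD{\Sites}$ whose projection meets $\Square$ corresponds (via the coincidence above) to an edge of $\VD{S_\Square}$ whose projection also meets $\Square$, the desired bound follows at once.

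The main step is verifying the coincidence of the two subdivisions inside $\Square$; once this is established, the rest is a direct invocation of the previous lemma. The slight subtlety to handle carefully is making sure that counting ``edges with projection intersecting $\Square$'' matches up correctly across the two diagrams, which follows from the fact that no Voronoi vertex of $\VD{\Sites}$ whose three incident cells are not all in $S_\Square$ can lie in $\Square$, so the combinatorial edge structure inside $\Square$ is common to both diagrams.
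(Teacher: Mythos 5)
The paper states this as a corollary of \lemref{nr:V:e:v} without giving an explicit argument, and your approach (restricting to the $k$ sites $S_\Square$ whose cells meet $\Square$, observing that $\VD{\Sites}$ and $\VD{S_\Square}$ coincide inside $\Square$, and invoking \lemref{nr:V:e:v} on $\VD{S_\Square}$) is surely the intended one. The coincidence-inside-$\Square$ claim and the fact that every Voronoi edge of $\VD{\Sites}$ meeting $\Square$ lies on a bisector between two sites of $S_\Square$, hence inside a unique edge of $\VD{S_\Square}$, are both correct.

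However, the final step --- ``the desired bound follows at once'' --- silently assumes that this correspondence is injective, and that is not established. A single edge $e'$ of $\VD{S_\Square}$ between $s_1,s_2 \in S_\Square$ can be cut, in the full diagram $\VD{\Sites}$, into several Voronoi edges $e_1, e_2, \dots$ by cells of sites in $\Sites \setminus S_\Square$; those cutting cells necessarily lie outside $\Square$ (otherwise their owner would be in $S_\Square$), but on a terrain the projected bisector arc $e'$ may exit $\Square$, pass through such a cell, and re-enter $\Square$, so that two or more of the $e_j$ each meet $\Square$ while mapping to the same $e'$. Nothing in the argument rules this out, so ``number of $\VD{\Sites}$-edges meeting $\Square$'' is only bounded by $O(k)$ \emph{times the maximum multiplicity of the map}, which is left uncontrolled. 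To close the gap you would need either to show the map is injective (e.g., that an edge of $\VD{S_\Square}$ cannot re-enter $\Square$ after leaving the region where the two diagrams agree), or to replace the counting step with an Euler-formula argument on the subdivision of $\Square$ induced by the (common) diagram, bounding both the internal Voronoi vertices and the number of cell pieces inside $\Square$ by $O(k)$.
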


\section{Upper bound}
\seclab{upper:bound}

We prove the following lemma first in the planar case and then extend
it to terrains with bounded slope.  The bounded expected complexity
then follows by examining the number of intersections of the chords
with the terrain triangulation in an $\sqrt{m}\times\sqrt{m}$ grid.

\begin{lemma}
    \lemlab{square} Let $\Sample$ be a set of $m$ points, sampled
    uniformly at random from a unit square, and let $\Square$ be a
    sub-square contained in the unit square of side length
    $1/\sqrt{m}$.  Then the expected number of points in $\Sample$
    that contribute to $\VD{\Sample}\cap \Square$ is $O(1)$.
\end{lemma}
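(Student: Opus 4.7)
The plan is to apply linearity of expectation together with a disk-emptiness argument. Let $c$ denote the center of $\Square$. Since the sites are i.i.d.\ uniform on $[0,1]^2$, the expected number of sites whose Voronoi cell meets $\Square$ equals $m \cdot \Pr[\text{the cell of a single uniform site } \site \text{ meets } \Square]$, where the probability is taken over all $m$ sample points. I will show that, conditioning on the location of $\site$ with $r = \distX{\site}{c}$, this probability decays like $\exp\pth{-\Omega\pth{mr^2}}$ once $r$ is at least a constant times $1/\sqrt{m}$, and then integrate over the position of $\site$.

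The key step is the following disk-emptiness claim. If $\site$'s cell meets $\Square$ at some point $\pnt$, then the open disk $D\pth{\pnt, \distX{\pnt}{\site}}$ contains none of the other $m-1$ sites. Because $\distX{c}{\pnt} \leq \sqrt{2}/(2\sqrt{m})$, the triangle inequality gives $D\pth{c, \rho} \subseteq D\pth{\pnt, \distX{\pnt}{\site}}$, where $\rho = r - \sqrt{2}/\sqrt{m}$. So whenever $r \geq \sqrt{2}/\sqrt{m}$ the disk $D\pth{c, \rho}$ must also be empty of the other $m-1$ sites. Since $c \in [0,1]^2$ and the unit square is convex, the area of $D\pth{c, \rho} \cap [0,1]^2$ is at least $\pi \rho^2/4$, so the probability of this emptiness event is at most $(1 - \pi \rho^2/4)^{m-1} \leq \exp\pth{-\Omega\pth{m \rho^2}}$.

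To finish, I partition the unit square into concentric annuli $A_i = \brc{\site : i/\sqrt{m} \leq \distX{\site}{c} < (i+1)/\sqrt{m}} \cap [0,1]^2$ for $i \geq 0$, each of area $O(i/m)$, so that the expected number of sample points in $A_i$ is $O(i)$. For the constantly many inner annuli $i \leq 3$, the trivial bound of $1$ per site gives an $O(1)$ contribution to the expected count. For $i \geq 4$, the bound above gives $e^{-\Omega(i^2)}$, so by linearity the contribution of $A_i$ to the expected count is $O(i) \cdot e^{-\Omega(i^2)}$, and the series $\sum_{i \geq 4} O(i) e^{-\Omega(i^2)}$ converges to a constant. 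Summing the two ranges yields the claimed $O(1)$ bound.

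The main step to watch is the disk-emptiness argument itself: one must verify that an empty witnessing disk for a Voronoi cell that reaches $\Square$ really does shrink to a comparable empty disk around the fixed center $c$, and one must argue that the clipping by the boundary of the unit square only costs a constant factor. Both are handled cleanly via the triangle inequality and the convexity of $[0,1]^2$, respectively, so the remaining work is routine.
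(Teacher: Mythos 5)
Your proof is correct and follows essentially the same strategy as the paper's: decompose the plane into concentric annuli around the center $c$ of $\Square$, observe that a site in a far annulus can only contribute if a comparably large disk around $c$ is empty of the other sites, and sum the resulting geometric-versus-Gaussian series. The only cosmetic differences are that the paper uses dyadically growing annuli (radii $2^i/\sqrt{2m}$) while you use linearly growing ones (radii $i/\sqrt{m}$), and that you phrase the emptiness witness via the triangle-inequality shrinkage of the disk $D(\pnt, \distX{\pnt}{\site})$ rather than a direct distance comparison; both versions go through.
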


\begin{proof}
    We place a sequence of exponentially growing disks centered at the
    center point of $\Square$.  Let $r_i = \frac{1}{\sqrt{2m}}2^{i}$,
    for $i=0,\dots, k=\ceil{ \lg \sqrt{2m} \,}$ (i.e. $r_0$ is the
    radius of the circumscribed circle of $\Square$).  Let $\Disk_i$
    be the disk of radius $r_i$, which is clipped to the unit square
    and let $\ringX{i} = \Disk_i\setminus \Disk_{i-1}$, for
    $i=1,\dots, k$.
   
    \parpic[r]{ \includegraphics[width=0.2\linewidth]{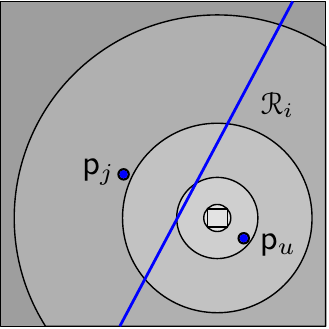}}
    
    Let the points in $\Sample$ be labeled $\pnt_1, \dots, \pnt_m$.
    Observe that the expected number of points from $\Sample$ that
    fall into $\Disk_2$ is $(\pi r_2^2)m = \frac{16\pi m}{2m} = 8\pi
    =O(1)$.  Hence we do not need to worry about their contribution to
    $\VD{\Sample}\cap \Square$. Otherwise, we claim that a point
    $\pnt_j$ which falls into $\ringX{i}$ for $i>2$, can only
    contribute to $\VD{\Sample}\cap \Square$ if $\Disk_{i-2}$ contains
    no points of $\Sample$. Assume for the sake of contradiction that
    the Voronoi cell of $\pnt_j$ intersects $\Square$, and there
    exists some point $\pnt_u$ in $\Sample$ that lies in
    $\Disk_{i-2}$.  By construction, we know $\pnt_j$ has distance
    greater than $(r_{i-1} - r_0)$ to any point in $\Disk_0$ and
    $\pnt_u$ has distance at most $(r_{i-2} + r_0)$ to any point in
    $\Disk_0$.  Hence we have that $\distSet{\pnt_j}{\Disk_0} >
    r_{i-1} - r_0 = 2 r_{i-2} - r_0 \geq r_{i-2} + r_0 \geq
    \distSet{\pnt_u}{ \Disk_1}$ for $i > 2$ and as such every point in
    $\Disk_0$ is strictly closer to $\pnt_u$ than $\pnt_j$, where
    $\distSet{\pnt}{X} = \min_{\pntA \in X} \distX{\pnt}{\pntA}$.
    
    Hence it is sufficient to bound the expected number of points
    $\pnt_j$ which fall into an annulus $\ringX{i}$ such that
    $\Disk_{i-2}$ is empty.  For $\pnt_j$ we define the indicator
    variable $X_i^j$ which is equal to $1$ if and only if $\pnt_j\in
    \ringX{i}$, and the indicator variable $Y_i^j$ which is equal to
    $1$ if and only if no other point falls into $\Disk_{i-2}$.  Hence
    a given point $\pnt_j$ can contribute to $\VD{\Sample}\cap
    \Square$ if and only if $X_i^jY_i^j=1$ for some value of $i$.
    Now, we know that
    \begin{align*}
        \Prob{X_i^j=1}%
        \leq%
        \pi r_i^2- \pi r_{i-1}^2%
        =%
        \frac{\pi}{2m}(2^{2i}-2^{2(i-1)})%
        =%
        \frac{3\pi}{2m}4^{i-1},
    \end{align*}
    and
    \begin{align*}
        \Prob{Y_i^j=1}%
        &\leq%
        \pth{1-\frac{1}{4}\pi r_{i-2}^2}^{m-1}%
        \leq%
        \exp\pth{-\frac{\pi}{4} r_{i-2}^2(m-1)}%
        =%
        \exp\pth{-\frac{\pi 4^{i-3}(m-1)}{2m}}%
        \leq%
        \exp(-4^{i-3}).
    \end{align*}
    where a factor of $1/4$ was added in the bound for $Y_i^j$ due to
    boundary effects that might arise from the position of $\Square$
    in the unit square.  Hence the number of points that can affect
    $\VD{\Sample}\cap \Square$ is bounded by $\sum_{j}\sum_{i>2}
    X_i^jY_i^j$, for which in expectation we have,
    \begin{align*}
        \Ex{\sum_{j} \sum_{i>2} X_i^jY_i^j}%
        &=%
        \sum_{j} \sum_{i>2} \Ex{X_i^j} \Ex{Y_i^j}%
        \leq%
        \sum_{j} \sum_{i>2} \frac{3\pi}{2m}4^{i-1} e^{-4^{i-3}}%
        =%
        \frac{3\pi}{2} \sum_{i>2} 4^{i-1} e^{-4^{i-3}} %
        =%
        O(1),
    \end{align*}
    by linearity of expectation and the independence of $X_i^j$ and
    $Y_i^j$ for all $i$ and $j$.
\end{proof}

\begin{lemma}
    Let $\Terrain$ be a terrain with bounded slope $\slConst$.  Let
    $\Sample$ be a random sample of $m$ points, either sampled
    uniformly from $\Terrain$, or uniformly from the unit square and
    then lifted vertically up to $\Terrain$.  Let $\Square$ be a
    sub-square contained in the unit square of side length
    $1/\sqrt{m}$.  Then the expected number of points in $\Sample$
    that contribute to the portion of the Voronoi diagram of $\Sample$
    on $\Terrain$ that lies above $\Square$ is
    $O\pth{\slTFactor}$.
    
    \lemlab{contribution:sites}
\end{lemma}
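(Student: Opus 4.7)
The plan is to adapt the proof of \lemref{square} by replacing Euclidean disks with geodesic disks centered at the lifted center point of $\Square$, and to absorb the bounded-slope distortion using \factref{geo:dist} and \lemref{geo:disk:area}. All constants will be expressed in powers of $\slFactor$, and the bookkeeping of these powers (which must simultaneously cover both sampling models) will be the main work.

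Concretely, I would define $r_i = \slFactor\, 2^{i}/\sqrt{2m}$ for $i = 0, 1, \ldots, k$, with $k = O(\log m)$ chosen so that $\Disk_k$ covers the terrain, let $\Disk_i$ be the geodesic disk of radius $r_i$ centered at the lifted center of $\Square$, and set $\ringX{i} = \Disk_i \setminus \Disk_{i-1}$. The extra factor $\slFactor$ in $r_0$ is precisely what is needed to ensure that the lifted copy of $\Square$ lies inside $\Disk_0$: by \factref{geo:dist} the geodesic diameter of the lifted square is at most $\slFactor\sqrt{2}/\sqrt{m}$. The crucial geometric observation from the planar proof then transfers to the geodesic metric verbatim, since geodesic distance satisfies the triangle inequality: for $i > 2$, if some sample point $\pnt_u \neq \pnt_j$ lies in $\Disk_{i-2}$, then for every lifted $\pnt$ in $\Square$ we have $\distGeo{\pnt_j}{\pnt} \geq r_{i-1} - r_0 \geq r_{i-2} + r_0 \geq \distGeo{\pnt_u}{\pnt}$, so the cell of $\pnt_j \in \ringX{i}$ cannot intersect the portion of the surface above $\Square$.

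To bound the expectations, I would use \lemref{geo:disk:area} (which sandwiches the area of a geodesic disk of radius $r$ between $\pi r^2/\slFactor^2$ and $\pi\slFactor r^2$), together with the fact that the total surface area of $\Terrain$ lies between $1$ and $\slFactor$; for the planar sampling model, \factref{geo:dist} furthermore shows that the projection of $\Disk_i$ to the domain is contained between Euclidean disks of radii $r_i/\slFactor$ and $r_i$. In both sampling models, after a constant-factor loss for possible clipping against the boundary of the unit square, this yields $\Prob{X_i^j = 1} = O\pth{\slFactor^3 4^i/m}$ for $\pnt_j \in \ringX{i}$, and $\Prob{Y_i^j = 1} \leq \exp\pth{-\Omega(4^i/\slFactor)}$ for $\Disk_{i-2}$ being empty of the other $m - 1$ points. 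By independence of $X_i^j$ and $Y_i^j$ (they depend on disjoint sample coordinates), linearity of expectation, and the identity $\sum_{i \geq 0} 4^i e^{-a 4^i} = O(1/a)$ applied with $a = \Omega(1/\slFactor)$, the expected number of contributing sites outside $\Disk_2$ is $O\pth{\slFactor^3 \cdot \slFactor} = O\pth{\slFactor^4}$, while the expected number inside $\Disk_2$ is $O\pth{\slFactor^3}$ by the same area bound. Summing gives $O\pth{\slFactor^4} = O\pth{\slTFactor}$, as required. The main obstacle is purely bookkeeping: making sure that the geodesic-area factors and the two different sampling distributions can be unified into the single $\slFactor^5$ bound without proliferating cases.
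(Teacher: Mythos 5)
Your proof is correct and follows essentially the same strategy as the paper's: exponentially growing annuli around the centre of $\Square$, the observation that a site in $\ringX{i}$ can only matter if $\Disk_{i-2}$ is empty, and the $\Ex{X_i^j}\Ex{Y_i^j}$ computation via independence and a convergent sum. The only departure is a small technical choice — you use geodesic disks with growth ratio $2$ and a $\slFactor$-scaled base radius, while the paper keeps planar disks and inflates the growth ratio to $2\slFactor+1$, converting to geodesic distances via \factref{geo:dist} inside the dominance chain; your variant makes the dominance step a pure triangle inequality in the geodesic metric (and in fact gives a slightly sharper $O(\slFactor^4)$ constant), but the two routes are otherwise interchangeable.
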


\begin{proof}
    Follows by a careful adaptation of the proof of \lemref{square}.
    To accommodate for the larger distances on the surface, we
    increase the radii of the disks slightly.
    
    So, let $r_i = \frac{1}{\sqrt{2m}}(2\slFactor + 1)^{i}$, for
    $i=1,\ldots,k$. Let $\Disk_i$ be the disk (in the plane) of radius
    $r_i$ centered at the center of $\Square$. And let $\ringX{i} =
    \Disk_i\setminus \Disk_{i-1}$, for $i=1,\dots, k$, as defined
    above.  For points $\pnt, \pntA \in \Disk_{i-2}$ and $\pntB \in
    \ringX{i} = \Disk_{i} \setminus \Disk_{i-1}$, we have that
    \begin{align*}
        \distGeo{\pnt}{\pntA}%
        \leq%
        \slFactor \distX{\pnt}{\pntA}%
        \leq%
        \slFactor 2 r_{i-2} %
        \leq %
        r_{i-1} - r_{i-2}%
        <%
        \distX{\pntA}{\pntB} %
        \leq%
        \distGeo{\pntA}{\pntB}.
    \end{align*}
    As such, as before, for a point in $\Sample \cap \ringX{i}$ to
    affect the Voronoi diagram on $\Square$ requires that
    $\Disk_{i-2}$ is empty of any points of $\Sample$.
    
    Now, consider the case that the points are sampled from the
    terrain.  Define $X_i^j$ and $Y_i^j$ as in \lemref{square}. Note
    that the area of the terrain can only increase from one by lifting
    the individual triangles.  We have that
    \begin{align*}
        \Prob{X_i^j=1}%
        &\leq%
        \frac{\text{area on terrain of } \ringX{i}}{\text{area of
              terrain}}%
        \leq%
        \slFactor \pth{ \pi r_i^2- \pi r_{i-1}^2 }%
        \leq%
        \frac{\pi \slFactor}{2m} \pth{ (2\slFactor +1)^{2i} -
           (2\slFactor +1)^{2i-2} }%
        \\%
        &\leq%
        \frac{\pi \slFactor}{2m} (2\slFactor +1)^{2i}.
    \end{align*}
    Similarly, since only a quarter of $\Disk_{i-2}$ might be in the
    terrain, the probability of this area is bounded from below by
    $\pi r_{i-2}^2/(4\slFactor)$. Plugging this into the analysis of
    \lemref{square}, we have
    \begin{align*}
        \Prob{Y_i^j=1}%
        &\leq%
        \pth{1-\frac{\pi r_{i-2}^2}{4\slFactor^2 } }^{m-1}%
        \leq%
        \exp \pth{-\frac{\pi r_{i-2}^2}{4\slFactor^2 }(m-1) }%
        \leq%
        \exp \pth{- (2\slFactor + 1)^{2i-5} }.
    \end{align*}
    As before, we thus have
    \begin{align*}
        \Ex{\sum_{j} \sum_{i>2} X_i^jY_i^j}%
        \leq%
        \sum_{j=1}^m \sum_{i>2} \frac{\pi \slFactor}{2m} (2\slFactor
        +1)^{2i} \cdot \exp \pth{- (2\slFactor + 1)^{2i-5} }%
        = O(1).
    \end{align*}
    The only missing component is bounding the expected number of
    points of $\Sample \cap \Disk_2$, as they can affect the Voronoi
    diagram in $\Square$. Arguing as above, this quantity is bounded
    by $m \slFactor \cdot (\text{area of }\Disk_2) \leq m \slFactor \pi
    r_2^2 \leq \frac{\slFactor}{2}(2\slFactor + 1)^{4} =
    O\pth{\slFactor^5}$.
    
    Note that if we drop the factors of $\slFactor$ in the bounds for
    $X_i^j$, $Y_i^j$, and the area of $\Disk_2$, then the above
    becomes a proof for the case when we sample from the unit square
\end{proof}

\begin{theorem}
    Let $\Terrain$ be a terrain. Let $\Sites$ be a random sample of
    $m$ points, either sampled uniformly from the surface of
    $\Terrain$, or uniformly from the domain and then lifted
    vertically up to the surface.  The expected combinatorial
    complexity of $\VD{\Sites}$ is in $O\pth{ \slConst
       \slTFactor\ldConst(n+m)}$.
    
    \thmlab{main}
\end{theorem}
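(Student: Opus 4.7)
The plan is to decompose the combinatorial complexity of $\VD{\Sites}$ according to the three contributions listed in \secref{def:complexity}: Voronoi vertices, breakpoints, and intersections of Voronoi chords with the terrain triangulation. The first two are immediate: by \lemref{nr:V:e:v} the number of Voronoi vertices (and edges) is $O(m)$, and by \factref{breakpoints} the total number of breakpoints is at most $n$. All of the real work therefore lies in bounding the expected number of chord--edge intersections.

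To bound these intersections I would use the $\sqrt{m}\times\sqrt{m}$ grid of side-$1/\sqrt{m}$ sub-squares already introduced in \lemref{contribution:sites}. Fix such a sub-square $\Square$ and let $B_\Square$ denote the set of bisectors contributing a chord piece to $\VD{\Sites}\cap \Square$. Since each Voronoi edge lies on exactly one bisector, $\cardin{B_\Square}$ is at most the number of Voronoi edges meeting $\Square$, which is $O(\slTFactor)$ in expectation by \lemref{contribution:sites} combined with \corref{nr:V:e:cell}. By \factref{l:d:bisector} each individual bisector has $O(\slConst)$-low-density chords, so for every fixed sample the set $C_\Square$ of Voronoi chords meeting $\Square$, being a union of $\cardin{B_\Square}$ such sets, is $O(\slConst \cardin{B_\Square})$-low-density in the plane. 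Coupling this with the $\ldConst$-low-density of the terrain edge set $E_\Square$ meeting $\Square$ and applying \factref{l:d:intersections} yields a pointwise upper bound of $O\pth{\slConst \cardin{B_\Square}\cardin{E_\Square} + \ldConst \cardin{C_\Square}}$ on the intersections inside $\Square$, which in expectation becomes $O\pth{\slConst\slTFactor \cardin{E_\Square} + \ldConst \Ex{\cardin{C_\Square}}}$.

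It then remains to sum over sub-squares, for which I would use the standard short-vs-long split with respect to the sub-square circumradius $O(1/\sqrt{m})$. Low density forces at most $O(\ldConst)$ long terrain edges to meet each sub-square (contributing $O(\ldConst m)$ in total), while each short terrain edge meets only $O(1)$ sub-squares (contributing $O(n)$), so $\sum_\Square \cardin{E_\Square} = O(n + \ldConst m)$. The same split for the Voronoi chords, combined with the $O(\slConst\slTFactor)$ expected local density derived above, gives $O(\slConst\slTFactor m)$ expected long-chord incidences in total; and since the total number of Voronoi chords is $O(n+m)$ (each chord is delimited by a Voronoi vertex or a breakpoint, whose combined count is $O(n+m)$), the short chords contribute $O(n+m)$. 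Hence $\sum_\Square \Ex{\cardin{C_\Square}} = O(\slConst\slTFactor(n+m))$. Plugging both sums into the per-sub-square bound and using linearity of expectation gives the claimed $O(\slConst\slTFactor\ldConst(n+m))$ bound on chord--edge intersections, which subsumes the $O(n+m)$ contribution from Voronoi vertices and breakpoints.

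The main subtlety I anticipate is the in-expectation handling of the local density constant $\slConst\cardin{B_\Square}$, which is itself a random variable: \factref{l:d:intersections} must be applied for every fixed sample before the expectation is taken, and one should verify that the long-vs-short counting argument survives this passage. The case in which the sites are sampled from the surface of $\Terrain$ rather than from $\Domain$ needs no separate treatment, as \lemref{contribution:sites} already absorbs the slope factors in both sampling regimes and \factref{geo:dist} together with \lemref{geo:disk:area} translates distances and areas between the surface and the domain.
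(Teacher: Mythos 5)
Your proposal follows essentially the same route as the paper's proof of \thmref{main}: same three-way decomposition of the complexity, same $O(m)$ grid of sub-squares of side $1/\sqrt m$, same pointwise application of \factref{l:d:bisector} and \factref{l:d:intersections} to get a per-cell bound parameterized by the number of contributing bisectors, same use of \lemref{contribution:sites} together with \corref{nr:V:e:cell} to bound that number in expectation, and the same short-versus-long split against the cell diameter to control the sums over cells. The subtlety you flag at the end --- that \factref{l:d:intersections} must be applied pointwise before the expectation is taken because the local density constant is itself random --- is precisely the care the paper takes when it separates deterministic quantities like $\cardin{\Edges_i^{\leq l}}$ from the random $k_i$ before invoking linearity of expectation.
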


\begin{proof}
    As described in \secref{def:complexity}, the combinatorial
    complexity of the Voronoi diagram is the sum of the number of
    breakpoints of Voronoi edges, the number of Voronoi vertices and
    the number of intersections of triangulation edges with \chords of
    Voronoi edges. By \factref{breakpoints} the number of breakpoints
    is bounded by $O(n)$, and by \lemref{nr:V:e:v} the number of
    Voronoi vertices is bounded by $O(m)$.
    
    It remains to bound the number of intersections of the set of
    \chords with the triangulation.  To this end, we place a grid on
    the domain of the terrain, such that the side length of each grid
    cell is $l=1/\sqrt{m}$ and we obtain $M=O(m)$ grid cells which
    together cover the domain of the terrain.  Now, let
    $\Cell_1,\dots,\Cell_M$ denote these grid cells. Consider the grid
    cell $\Cell_i$ and the set of \chords of the Voronoi diagram which
    intersect this grid cell, let this set be $\Chords_i$. Similarly,
    let $\Edges_i$ denote the subset of edges of the triangulation,
    which intersect $\Cell_i$. Since we assumed that the triangulation
    is $\ldConst$-low density, also $\Edges_i$ is a $\ldConst$-low
    density set. By \factref{l:d:bisector} we have that the set of
    \chords, which originate from the same Voronoi edge (and therefore
    from the same bisector) form an $O(\slConst)$-low density set.  Let
    $k_i$ denote the number of Voronoi edges that contribute \chords
    to $\Chords_i$, we have that $\Chords_i$ is a $O(\slConst k_i)$-low
    density set.  By \factref{l:d:intersections}, the number of
    intersections between objects of $\Edges_i$ and objects of
    $\Chords_i$ is in $O(\slConst k_i\cardin{\Edges_i} + \ldConst
    \cardin{\Chords_i})$.
    
    Now, in order to bound the overall number of intersections, let
    $\Chords_i^{>l}$ denote the subset of \chords which are longer
    than $l$, similarly, let $\Chords_i^{\leq l}$ denote the \chords
    in $\Chords_i$ which have length smaller or equal to $l$ and let
    $\Edges_i^{\leq l}$ and $\Edges_i^{> l}$ be defined analogously.
    By the above analysis, we have that there exists some constant
    $\constA$, such that it holds for the overall number of
    intersections $\Intersections$,
    \begin{align*}
        \Intersections \leq \constA \sum_{i \geq 1}^{M}
        \pth{\slConst k_i \cardin{\Edges_i} + \ldConst
           \cardin{\Chords_i}} = \constA \sum_{i \geq 1}^{M}
        \pth{\slConst k_i\pth{\cardin{\Edges_i^{> l}}
              +\cardin{\Edges_i^{\leq l}}} +
           \ldConst\pth{\cardin{\Chords_i^{> l}} +
              \cardin{\Chords_i^{\leq l}}}}.
    \end{align*}
    By the definition of low density sets, we have that
    $\cardin{\Edges_i^{> l}} = O(\ldConst)$ and $\cardin{\Chords_i^{>
          l}} = O(\slConst k_i)$, since they intersect the bounding
    ball of the grid cell $\Cell_i$, which has radius $O(l)$.
    Therefore, it must be that there exists a constant $\constB$ such
    that,
    \begin{align*}
        \Intersections%
        \leq%
        \constA \pth{\sum_{i \geq 1}^{M} \slConst
           k_i\cardin{\Edges_i^{\leq l}} +
           \ldConst\cardin{\Chords_i^{\leq l}} + c_2\ldConst\slConst
           k_i}%
        \leq%
        \constA \pth{\sum_{i \geq 1}^{M} \slConst
           k_i\cardin{\Edges_i^{\leq l}} + c_2\ldConst\slConst
           k_i}+\constA \ldConst 4\cardin{\Chords},
    \end{align*}
    where the last inequality follows from the fact that any chord in
    $\Chords_i^{\leq l}$ can intersect at most four grid cells, since
    the grid cells have side length equal to $l$ (similarly any edge
    in $\Edges_i^{\leq l}$ can intersect at most three grid cells).
    
    Finally, note that the number of Voronoi cells that are expected
    to intersect a grid cell in their projection is bounded by
    $O(\slTFactor)$ by \lemref{contribution:sites}. Thus by
    \corref{nr:V:e:cell} we have that $\Ex{k_i} = O(\slTFactor) $.
    Therefore in expectation,
    \begin{align*}
        \Ex{\Intersections} &\leq%
        \constA \Ex{\sum_{i \geq 1}^{M} \pth{ \slConst k_i\cardin{
                 \Edges_i^{\leq l}} + c_2\ldConst\slConst k_i}} +
        \constA \ldConst 4\cardin{\Chords}%
        =%
        \constA \sum_{i \geq 1}^{M} \pth{ \slConst
           \Ex{k_i}\cardin{\Edges_i^{\leq l}} + c_2\ldConst\slConst
           \Ex{k_i}}
        + \constA \ldConst 4\cardin{\Chords} \\
        &\leq%
        \constC \slConst \slTFactor \ldConst \pth{\sum_{i \geq 1}^{M}
           \pth{\cardin{\Edges_i^{\leq l}} + 1}+ 4\cardin{\Chords}}
        \leq \constC \slConst \slTFactor \ldConst
        (3\cardin{\Edges}+M+4\cardin{\Chords})
    \end{align*}
    for some constant $\constC$ (note that we used the fact that
    $\cardin{\Edges_i^{\leq l}}$ is independent of the random
    sampling).  Furthermore, observe that by \lemref{nr:V:e:v} the
    overall number of Voronoi edges is $O(m)$. Recall that every
    Voronoi edge is broken up by breakpoints into \chords.  Every
    breakpoint increases the number of \chords by one. Using
    \factref{breakpoints}, it follows that the overall number of
    \chords $\cardin{\Chords}$ is in $O(n+m)$. Since the number of
    edges of the triangulation $\cardin{\Edges}$ is $O(n)$, we
    conclude that $\Ex{\Intersections} = O(\slConst \slTFactor
    \ldConst(n+m))$.
\end{proof}

\section{Lower bound}
\seclab{lower:bound}

In this section we show that if we drop the assumptions on the
terrain, then the expected worst-case complexity of the resulting
geodesic Voronoi diagram can be $\Omega\pth{n m^{2/3}}$ if the sites
are sampled uniformly at random from the unit square.

In the following we will refer to the walls of the unit square defined
by $x=0$, $x=1$, $y=0$, and $y=1$ as the \emphi{\west, \east,
\south, and \north walls}, respectively.


\begin{example}
    We start with the easy case of planar map and points in the unit
    square.

    \parpic[r]{%
           \includegraphics[width=.2\linewidth]{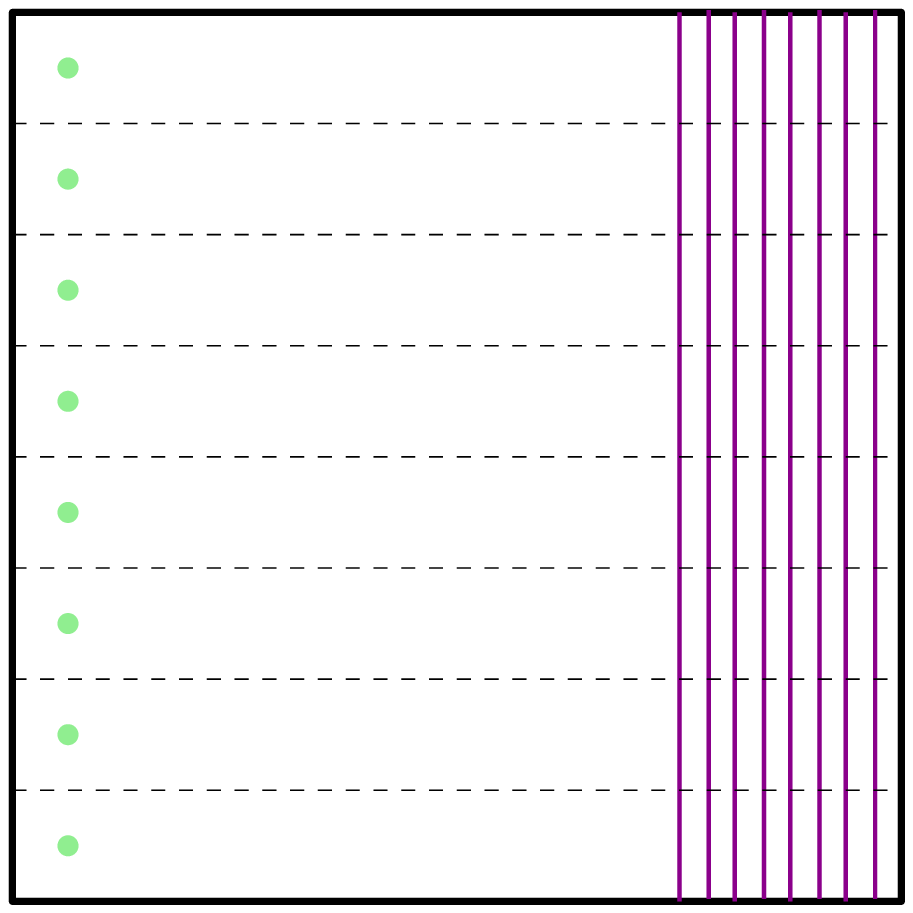} }

    First place $m$ points in a column near and parallel to the \west
    wall of the unit square such that the spacing between each
    adjacent pair of points is $\Theta\pth{1/m}$ (we assume $m=O(n)$).
    The planar map consists of $n$ vertical lines near the \east wall
    of the unit square that extend from the \north wall to the \south wall.
    Now, the boundaries of the Voronoi cells of these points extend from the
    west to the east wall and are parallel to the north and south walls, and
    hence the complexity of the overlay of the Voronoi diagram with the planar
    map is $\Theta\pth{nm}$, since it is an $n \times m$ grid.

    The later constructions use this as their starting point.

    \exelab{planar}
\end{example}

\subsection{Farming -- an $\Omega\pth{n \sqrt{m}\ts \ts }$ %
   example}
\seclab{farming}

\subsubsection{Construction}

The height function used in the following construction of a terrain
has (essentially) only two values, zero and $h$.  The areas between a
part of the terrain that is of height zero and of height $h$ consist
of very narrow and steep boundary regions. This intermediate boundary
would have a very small measure in the projection, and the reader can
think of it as having measure zero.  Moreover, $h$ is chosen to be
sufficiently large so that no point at height zero can affect the
Voronoi diagram at height $h$.  One can therefore view the following
terrain construction as a flat unit square, where we have cut out or
``forbidden'' areas (that have height $0$).  Therefore, for the sake
of simplicity of exposition, an area being constructed is flat, at
height $h$, and the adjacent forbidden area is at height zero.  Our
main building blocks will be farms.  We define a \emphi{farm} to be a
square of side length $1/(c\sqrt{m})$.  Intuitively, farms are part of
the terrain which with constant probability (the constant will depend
on $c$) will receive at least one point from the random sample (i.e. a
farm takes the place of a point from the example in \exeref{planar}).
We define the \emphi{diameter} of a farm to be the quantity
$\diamConst = \sqrt{2}/(c\sqrt{m})$ (that is, the distance of the
furthest two points in a farm).

\parpic[r]{%
   \begin{minipage}{0.2\linewidth}%
       \centerline{\includegraphics[width=.95\linewidth]{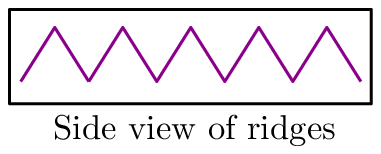}}%
   \end{minipage}}

We now define a sequence of ridges to take the place of the planar map
from \exeref{planar} (i.e. in expectation we would like the Voronoi
diagram to look like a grid over the ridges).  Formally, let a
\emphi{sequence of ridges} of length $n$ be a sequence of $2n$
rectangles, $r_1,\dots,r_{2m}$ such that the right edge of $r_i$ is
the same as the \west edge of $r_{i+1}$ for $i=1,\dots, 2n-1$, $r_i$
has a slope of $45^\circ$ for odd $i$ and $-45^\circ$ for even $i$,
all rectangles extend from the \north to \south walls of the unit
square, and the geodesic distance from the left edge of $r_1$ to the
right edge of $r_{2n}$ is $1/(c 2^n)$ (which is $O(1/2^m)$ since we
assumed $m=O(n)$).  Refer to the figure above to the right for a side
view of the ridges.

\parpic[r]{
   \includegraphics[width=.2\linewidth]{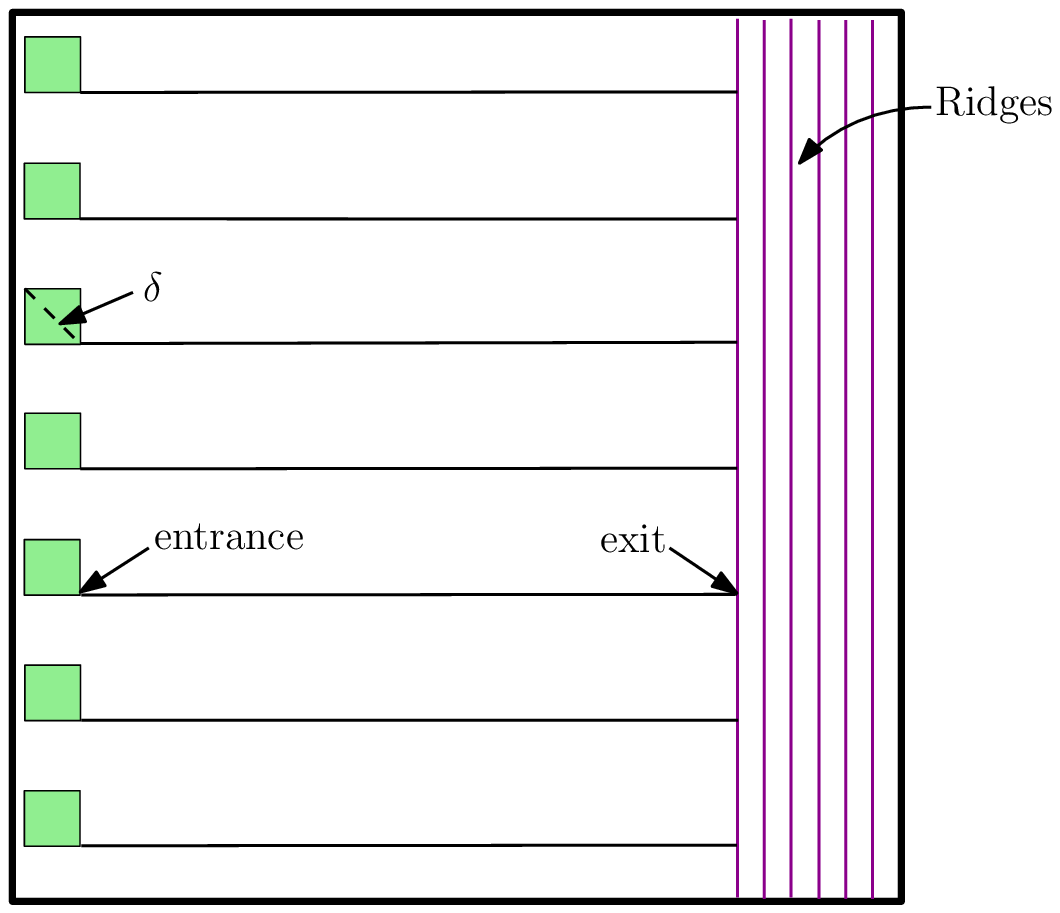}}

The construction of the $\Omega\pth{n \;\mysqrt{m} \ts}$ example is as
follows.  Place $\Theta\pth{\sqrt{m} \ts}$ farms from \north to \south
along the \west wall of the unit square, with $2/(c \sqrt{m})$ spacing
in between each adjacent pair.  Next build a sequence of
$\Theta\pth{n}$ ridges near (and parallel to) the \east wall of the
unit square.  Then connect each farm directly to the leftmost ridge by
creating a line parallel to the north and south walls connecting the
\southeast corner of the farm to the first ridge. See figure on the
right.  We refer to such a line as \emphi{road}. The roads stay at
height $h$ and to the left and right of a road, the height drops to
zero as described earlier.


\subsubsection{Analysis}

\begin{definition}
    The point at which a farm connects to its road is its
    \emphi{entrance} (i.e. the \southeast corner of the farm), and
    the point at which the road connects to the leftmost ridge is its
    \emphi{exit}.  We say that the point (from the random sample of
    $m$ points) that is closest to the entrance for some farm, is that
    farm's \emphi{dominating point}.
    Let $p$ and $e$ be the dominating point and exit, respectively, of
    some farm.  We say that another point $q$ from the random sample
    that is contained in another farm and such that $\distGeo{q}{e} <
    \distGeo{p}{e}$, \emphi{eliminates} (the Voronoi cell of) $p$,
    where $\distGeo{p}{e}$ denotes the shortest path on the terrain
    from $p$ to $e$.  If there are no points which eliminate a given
    dominating point, then the dominating point is \emphi{alive}.

    \deflab{eliminates}
\end{definition}

\begin{lemma}\lemlab{lower:bound:simple}
    For the construction of the terrain described above, if one picks
    uniform at random $m$ points in the unit square, their induced
    geodesic Voronoi diagram on this terrain has complexity
    $\Omega\pth{ n \sqrt{m} \ts }$.
\end{lemma}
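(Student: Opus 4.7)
The plan is to show that $\Omega(\sqrt m)$ of the farms are occupied in expectation, that each occupied farm produces a surviving ``alive'' dominating point, and that the resulting $\Omega(\sqrt m)$ alive dominating points cut the ridge strip into horizontal bands whose $\Omega(\sqrt m)$ separating bisectors each cross $\Theta(n)$ triangulation edges, giving the desired $\Omega(n\sqrt m)$ chord--edge intersections.

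To begin, each farm has area $1/(c^2 m)$, so its probability of being empty is $(1-1/(c^2 m))^m \le e^{-1/c^2}$, and by linearity of expectation $\Omega(\sqrt m)$ farms are occupied. Any sample point outside a farm $F$ must traverse a road of length $\ell = \Theta(1)$ to reach the ridges and so is at geodesic distance at least $\ell$ from $F$'s entrance, whereas a sample inside $F$ is at geodesic distance at most $\diamConst = O(1/\sqrt m)$; hence the dominating point $p_F$ of every occupied farm lies inside $F$. Moreover, for two distinct farms $F, F'$ with vertical separation $d$ on the leftmost ridge, the bounds $\distGeo{p_{F'}}{e_F} \ge \ell + d$ and $\distGeo{p_F}{e_F} \le \ell + \diamConst$, together with the prescribed spacing $d \ge 2/(c\sqrt m) > \sqrt{2}/(c\sqrt m) = \diamConst$, imply that $p_{F'}$ cannot eliminate $p_F$. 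Consequently in expectation $\Omega(\sqrt m)$ dominating points are alive.

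The crux is to analyze the Voronoi diagram on the ridges. For a point $x$ in the ridge strip and a sample $q$ in occupied farm $F_q$, decomposing the geodesic into east--west traversal of the ridges, vertical traversal along the leftmost ridge, the road of $F_q$, and travel inside $F_q$ yields
\[
   \distGeo{x}{q} = |y_x - y_{F_q}| + \ell + \distX{q}{\mathrm{ent}(F_q)} + O(1/2^n),
\]
where the last term bounds the east--west traversal across the exponentially shrinking ridges. Equating these decompositions for two alive dominating points reveals that their bisector is, up to an $O(1/2^n)$ perturbation, a horizontal line in the ridge strip. Since consecutive alive dominating points are vertically separated by $\Theta(1/\sqrt m) \gg 1/2^n$ (using the standing assumption $m = O(n)$), each such bisector spans the entire east--west width of the strip and therefore crosses each of the $\Theta(n)$ triangulation edges separating consecutive ridge rectangles. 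Multiplying by the $\Omega(\sqrt m)$ bisectors gives $\Omega(n\sqrt m)$ chord--edge intersections in expectation.

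The main obstacle will be this last step: one must ensure that a bisector between two alive dominating points truly traverses all $\Theta(n)$ ridges and is not locally cut off by a third dominating point. The exponential shrinking of the ridge widths is precisely the feature that makes this work, since it renders east--west travel on the ridges $O(1/2^n)$---negligible compared to the $\Theta(1/\sqrt m)$ vertical scale---and thereby essentially preserves the vertical ordering of the farms throughout the ridge strip.
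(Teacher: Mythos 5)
Your proof follows the same approach as the paper: the $2/(c\sqrt{m})$ spacing between farms exceeds the farm diameter $\diamConst=\sqrt{2}/(c\sqrt{m})$, so no sample in one farm can eliminate another farm's dominating point, hence $\Omega(\sqrt{m})$ dominating points are alive in expectation and each corresponding cell crosses all $\Theta(n)$ ridges, giving $\Omega(n\sqrt{m})$. One small thing worth stating explicitly (as the paper does) is that the roads and ridges are chosen thin enough that the probability of any sample landing on them is negligible; your distance bound for samples outside a farm implicitly assumes this, and it is also what lets you treat the dominating points as the only relevant competitors.
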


\begin{proof}
    The area of each farm is $\Theta\pth{1/m}$ and hence a sample of
    $m$ points picked uniformly at random from the unit square, will
    have at least one point with constant probability in each farm.
    Moreover, since we constructed $\Theta\pth{\sqrt{m}}$ farms, this
    implies that in expectation $\Theta\pth{\sqrt{m}}$ farms will
    receive at least one point.  Furthermore, the width of the sequence of
    ridges and roads was chosen such that the probability that either
    receives a point is exponentially small (and hence in the
    following we assume they do not receive any point).  
    
    Now consider a farm which received at least one point, and let $p$ be its
    dominating point.
    Observe that the Voronoi cell of $p$ contains the entire road
    connecting this farm to the ridges, and its Voronoi cell extends
    all the way to the rightmost edge of the sequence of ridges, and hence will
    be of complexity $\Omega\pth{n}$.  Indeed, by our construction, only
    a point from another farm can prevent the Voronoi cell of $p$ from
    reaching the rightmost ridge.  However, the spacing of the farms
    was chosen to prevent this.  In the worst case $p$ is in the \northwest
    corner of its farm, and an adjacent farm has a point $q$ at
    the \southeast corner.  Let $l$ be the length of a road.  Let
    $e_p$ (resp. $e_q$) be the exit of the farm containing $p$ (resp
    $q$).  Now consider the geodesic shortest path connecting
    $e_p$ to the rightmost ridge.  Every point on this segment is in distance at
    most $\diamConst+l+1/(c 2^n)$ from $p$.  However, the
    closest point on this segment from $q$ is at a distance of at
    least $l+2/(c\sqrt{m}) \geq \diamConst+l+1/(c 2^n)$,
    and so $q$ cannot prevent the Voronoi cell of $p$ from reaching
    all the way to the rightmost ridge.

    Therefore, in expectation, we have that $\Theta\pth{\sqrt{m}}$
    farms have a point whose Voronoi cell extends all the way across the
    sequence of ridge, which gives a Voronoi diagram that in expectation
    has complexity $\Omega\pth{n\sqrt{m}}$.
\end{proof}

\subsection{Industrial farming -- an %
   $\Omega\pth{n m^{2/3}\ts \ts }$ %
   example}

The challenge in improving the example above is that the distance of a
dominating point to the exit of a farm has too much variance (i.e.,
$\sqrt{1/m}$). Since there does not seem to be a way to decrease the variance
directly, instead we connect all the farms to the ridges, and carefully argue
about the expected complexity of the generated Voronoi diagram.

\subsubsection{Construction}

In the following we assume that $m = O(n)$.  

We set the side length of each square farm to be $1/ \sqrt{m}$ and
construct a sequence of $\Theta\pth{n}$ ridges near the \east wall of
the unit square.  We will place an $M \times M$ grid of farms inside
the unit square, where $M = \floor{\sqrt{m}/4}$.  Specifically, the
spacing between columns (which extend from north to south) will be 
$1/\sqrt{m}$ and the spacing between rows (which extend from west to 
east) will be $2/\sqrt{m}$.
The grid starts in the \northwest corner of the unit square.

We now describe the connecting roads from the farms to the ridges.
The following construction of the roads will ensure that the length of
each road is the same and that the distance between adjacent exits on
the ridges is at least $1/m$.  These two properties will be sufficient
for the analysis in the next section to go through.

\parpic[r]{%
   \includegraphics[width=.3\linewidth]{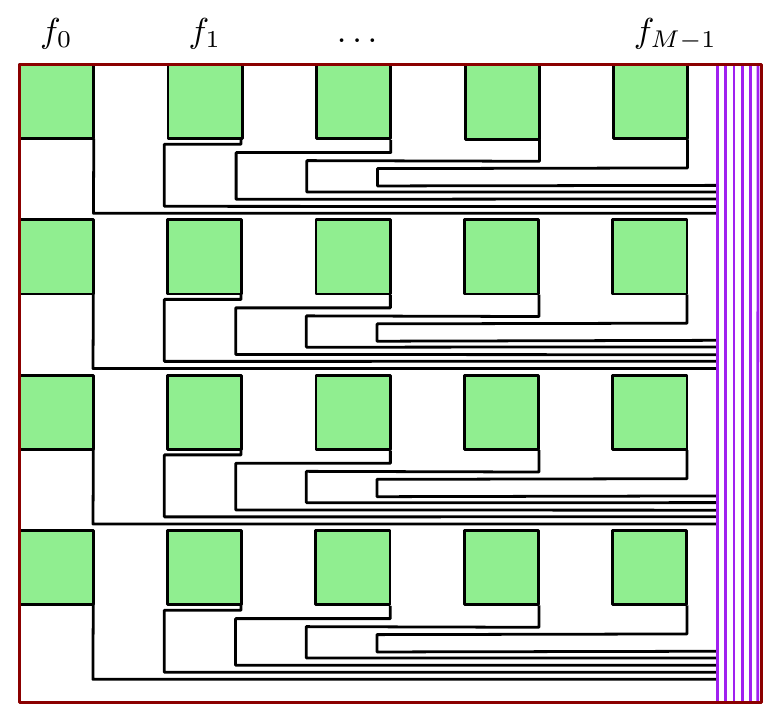} }

Consider a given row of farms.  Number the farms in this row $f_0,
\dots, f_{M-1}$ in increasing order of their distance to the \west
wall. Every farm has dimensions $1/\sqrt{m} \times 1/\sqrt{m}$, and the spacing
between two consecutive farms in a row is $1/\sqrt{m}$.  As such,
the $x$ coordinate of the entrance of the $i$\th farm is $x_i =
(2i+1)/\sqrt{m}$ (as before, the entrance to each road will be at the
\southeast corner of the farm).  The directions the $i$\th farm's
road goes from entrance to exit is described as follows:
\begin{compactenum}[\qquad(a)]
    \item \south for a distance of $\alpha_{i,a} = i/m$,
    \item \west for a distance of $\alpha_{i,b} = (x_i + \alpha_{i,a})/2$,
    \item \south for a distance of $\alpha_{i,c} = 1/\sqrt{m}-2\alpha_{i,a}$, and
    \item \east for a distance of $\alpha_{i,d}= w - (x_i-\alpha_{i,b})$ 
    all the way to the first ridge, where $w$ is the distance from the
    \west wall to the first ridge.
\end{compactenum}

This layout is sketched in the figure above to the right. Note that
the spacing in this figure only approximately matches the description.

\paragraph{Sanity checks.}
The road of the $i$\th farm starts at $x$ coordinate $x_i$, goes \west
for a distance of $\alpha_{i,b}$ and \east for a distance of
$\alpha_{i,d}$. Observe that the $x$ coordinate of the exit of this
road is $x_i - \alpha_{i,b} +\alpha_{i,d} = x_i -\alpha_{i,b} + w -
(x_i-\alpha_{i,b}) = w$.

Observe that the $i$\th farm will connect to the ridges in north to south
distance $\alpha_{i,a} + \alpha_{i,c} = 1/\sqrt{m}-i/m$ from the
southern boundary of the row of farms.  That is, adjacent farms in the row have
exits in distance $i/m$ apart along the first ridge (exits are
$\Theta\pth{1/\sqrt{m}}$ apart between rows).  Furthermore, each
road is of the same length. Indeed, let $r_i$ be the length of the
road for the $i$\th farm to its exit. We have that
\begin{align*}
    r_i &=
    \alpha_{i,a}%
    + \underbrace{\frac{{x_i + \alpha_{i,a}}}{2}}_{\alpha_{i,b}}%
    + \underbrace{\frac{1}{\sqrt{m}} - 2\alpha_{i,a}}_{\alpha_{i,c}} %
    + \underbrace{w - \pth{x_i-\frac{{x_i +
                \alpha_{i,a}}}{2}}}_{\alpha_{i,d}}%
    =%
    \frac{1}{\sqrt{m}} + w.
\end{align*}
That is, all the roads have exactly the same length.

\subsubsection{Competing farms}
\seclab{ideal}

We now prove that in expectation $\Theta\pth{m^{2/3}}$ dominating
points will have their Voronoi cells reach all the way to the \east
wall across the sequence of ridges.

\begin{observation}
    Let $p$ and $e_p$ be the dominating point and exit, respectively,
    of some farm $f$.  Let $q$ be a point which is in some other farm
    $f'$ with exit $e_q$.  If $f'$ is $i$ farms away (in the north 
    to south order of the exits of the farms along the first ridge) then
    $\distX{e_p}{e_q} = i/m$ and hence $q$ eliminates $p$ if and only if
    $\distGeo{q}{e_p} = \distGeo{q}{e_q} + i/m < \distGeo{p}{e_p}$.
    
    \obslab{ifarms}
\end{observation}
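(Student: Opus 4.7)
The observation has two pieces: the computation of $\distX{e_p}{e_q}$ and the reformulation of the elimination condition from \defref{eliminates}. The plan is to handle these in turn, relying only on the construction from the previous subsection.

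First I will compute $\distX{e_p}{e_q}$. The sanity check established that within a row the $i$\th farm has its exit on the first ridge at north-to-south distance $1/\sqrt{m}-i/m$ from the south boundary of that row. Hence in each row the exits are arithmetic, with spacing exactly $1/m$, so any two farms that are $i$ positions apart in the north-to-south order along the first ridge have exits at vertical distance $i/m$ on that ridge. Since both exits sit on the (vertical) first ridge, the ambient euclidean distance between them is $i/m$ as well.

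Next I will show $\distGeo{q}{e_p}=\distGeo{q}{e_q}+i/m$. By the design of the height function, every region of the terrain that is not a farm, a road, or part of a ridge lies at height $0$, while the farms, roads, and ridges are at height $h$, with $h$ chosen large enough so that any shortest path must remain on the height-$h$ structure. From $q\in f'$ the only exit onto the ridges is via the unique road of $f'$, which terminates at $e_q$; from $e_q$ the only height-$h$ route to $e_p$ is along the first ridge. Hence the geodesic shortest path from $q$ to $e_p$ decomposes as a shortest path from $q$ to $e_q$ followed by the segment $e_q e_p$ along the first ridge, yielding $\distGeo{q}{e_p}=\distGeo{q}{e_q}+\distX{e_p}{e_q}=\distGeo{q}{e_q}+i/m$.

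Finally, by \defref{eliminates}, the point $q$ eliminates $p$ exactly when $\distGeo{q}{e_p}<\distGeo{p}{e_p}$, so substituting the identity above gives the stated equivalence. The only subtle point in the argument is the claim that no shortcut through a height-$0$ region can beat the height-$h$ route; this is where the choice of $h$ in the construction (``sufficiently large so that no point at height zero can affect the Voronoi diagram at height $h$'') is used, and it is the one place where care is required.
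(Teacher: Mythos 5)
Your proof is correct and follows the route the paper leaves implicit (the observation is stated without proof, but the ``Sanity checks'' paragraph immediately before it establishes exactly the facts you invoke): the exit spacing of $1/m$ within a row gives $\distX{e_p}{e_q}=i/m$, the choice of $h$ forces any geodesic from $q$ to $e_p$ to stay on the height-$h$ structure and hence to decompose through $e_q$, and the final equivalence is a literal rewrite of \defref{eliminates}. One caveat worth being aware of, though it affects the paper's statement as much as your proof: the equality $\distX{e_p}{e_q}=i/m$ only holds when $f$ and $f'$ lie in the same row, since exits belonging to different rows are $\Theta(1/\sqrt{m})$ apart; for cross-row pairs one only has $\distX{e_p}{e_q}\geq i/m$, which makes the elimination condition \emph{harder} to satisfy and is therefore harmless (indeed conservative) for the lower-bound argument that follows.
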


Next, we prove that each dominating point is alive with probability
$\Omega\pth{1/m^{1/3}}$, using the following helper lemmas.

\begin{lemma}
    Let $X$ be a positive random variable with expected value $\mu$.
    We then have that $\Ex{e^{-X}} \geq e^{-2\mu}/2$
    
    \lemlab{exp}
\end{lemma}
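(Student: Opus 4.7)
The plan is to prove this with a one-shot application of Markov's inequality, combined with monotonicity of $e^{-x}$. The point of the bound is to convert control of the mean $\mu = \Ex{X}$ into a lower bound on the expected value of a convex but very small function of $X$. Since $e^{-X}$ is non-negative, we can afford to lower-bound the expectation by integrating only over a favorable event.

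Concretely, the first step is to observe by Markov's inequality, applied to the positive random variable $X$, that
\begin{align*}
    \Prob{X > 2\mu} \leq \frac{\Ex{X}}{2\mu} = \frac{1}{2},
\end{align*}
and therefore $\Prob{X \leq 2\mu} \geq 1/2$. The second step is to note that on the event $\{X \leq 2\mu\}$ the function $e^{-X}$ is at least $e^{-2\mu}$, because $e^{-x}$ is monotonically decreasing. The third step just puts these together: throwing away the (non-negative) contribution from $\{X > 2\mu\}$ yields
\begin{align*}
    \Ex{e^{-X}}
    \;\geq\;
    \Ex{e^{-X} \cdot \mathbf{1}_{\{X \leq 2\mu\}}}
    \;\geq\;
    e^{-2\mu} \cdot \Prob{X \leq 2\mu}
    \;\geq\;
    \frac{e^{-2\mu}}{2},
\end{align*}
which is the claimed inequality.

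There is no real obstacle: the bound is tight in form (the constant $2$ in the exponent comes directly from the Markov threshold $2\mu$, and the $1/2$ comes from Markov's tail bound). One could alternatively invoke Jensen's inequality to get the stronger estimate $\Ex{e^{-X}} \geq e^{-\mu}$, which in particular dominates $e^{-2\mu}/2$, but the Markov-based proof is more self-contained and produces exactly the form needed in the subsequent analysis of \secref{ideal}, so I would use it.
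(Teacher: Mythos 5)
Your proof is correct and is essentially the same as the paper's: both apply Markov's inequality at the threshold $2\mu$ to get $\Prob{X \leq 2\mu} \geq 1/2$, then lower-bound $\Ex{e^{-X}}$ by restricting to that event where $e^{-X} \geq e^{-2\mu}$. Your side remark that Jensen's inequality gives the cleaner bound $\Ex{e^{-X}} \geq e^{-\mu}$ (which indeed dominates $e^{-2\mu}/2$ for $\mu > 0$) is also correct and would have been a valid alternative route, but you ultimately present the same Markov-based argument as the paper.
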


\begin{proof}
    Markov's inequality implies that $\Prob{X< 2\mu } = 1-\Prob{X\geq
       2\mu } \geq 1-\mu/2\mu=1/2$.  Therefore, by the definition of
    expectation, we get $\Ex{e^{-X}} \geq%
    \Prob{X\geq 2\mu} * 0 + \Prob{ X < 2\mu } *e^{-2\mu} \geq
    e^{-2\mu}/2$.
\end{proof}

\begin{lemma}
    Let $f$ be a farm, and let $r(f)$ be the random variable that is
    the distance of the closest site (that falls into this farm) from
    the farm entrance (if there is no site in this farm, we set $r(f)
    =\sqrt{2/m}$).  Then, for any distance $s$, we have
    $\Prob{\MakeSBig r(f)\leq s} \leq ms^2\pi/4$, where equality holds
    for $0\leq s\leq 1/\sqrt{m}$.

    \lemlab{distance}
\end{lemma}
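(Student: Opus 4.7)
The plan is a direct union bound combined with an elementary area computation. Let $e$ denote the entrance of $f$, which by the construction of the industrial farming example sits at the southeast corner of the $(1/\sqrt{m}) \times (1/\sqrt{m})$ square farm. Define $A_s := f \cap \disk(e,s)$, the set of points of $f$ within planar distance $s$ of $e$.

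First I would observe that the event $\{r(f) \leq s\}$ is precisely the event that at least one of the $m$ sampled sites lies in $A_s$: any site falling in $A_s$ certifies $r(f) \leq s$, and conversely if $r(f)\leq s$ then by the definition in the lemma some site lies in $f$ at distance at most $s$ from $e$, hence in $A_s$. (For $s \geq \sqrt{2/m}$ the bound $ms^2\pi/4 \geq \pi/2$ is in any case close to or above $1$, so only the regime $s \leq \sqrt{2/m}$ carries content.) Since each $p_i$ is uniform on the unit square, $\Prob{p_i \in A_s} = \text{area}(A_s)$, and a union bound over the $m$ sites gives
\begin{align*}
    \Prob{r(f) \leq s} \;\leq\; \sum_{i=1}^{m} \Prob{p_i \in A_s} \;=\; m \cdot \text{area}(A_s).
\end{align*}

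Next I would compute $\text{area}(A_s)$ geometrically. Because $e$ is a corner of the square farm and the farm lies entirely in the northwest quadrant at $e$, the set $A_s = f \cap \disk(e,s)$ is exactly the portion of the disk of radius $s$ centered at $e$ that lies in that quadrant. For $s \leq 1/\sqrt{m}$ this quarter of the disk fits entirely inside $f$, so $A_s$ is a full quarter disk and $\text{area}(A_s) = \pi s^2/4$, giving $m\cdot\text{area}(A_s) = m\pi s^2/4$ (this is the equality regime asserted in the lemma). For $s > 1/\sqrt{m}$ the quarter disk spills beyond the boundary of $f$, so $\text{area}(A_s) < \pi s^2/4$ and the same union bound yields a strict inequality. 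Plugging either case into the display above produces the claim $\Prob{r(f) \leq s} \leq m\pi s^2/4$.

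There is essentially no technical obstacle; the only point requiring care is the geometric identification of $A_s$ as a quarter disk (rather than a half or full disk), which uses the fact that the entrance is placed at a corner of the farm and that $s = 1/\sqrt{m}$ is precisely the threshold at which the quarter disk first touches the opposite sides of $f$.
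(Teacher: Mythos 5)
Your proof matches the paper's own argument: identify the entrance as a corner of the square farm so that the set of farm points within distance $s$ is a quarter disk (full for $s\leq 1/\sqrt{m}$, clipped and hence smaller otherwise), and then apply a union bound over the $m$ independently sampled sites. The only caveat, shared with the paper's own proof, is that the union bound establishes only the inequality $\Prob{r(f)\leq s}\leq m\pi s^2/4$ and not the ``equality'' the lemma asserts for small $s$ (the exact probability is $1-(1-\pi s^2/4)^m$, which is strictly below $m\pi s^2/4$), so your parenthetical remark identifying the small-$s$ case as the equality regime overstates what the union bound gives, but this is an imprecision in the lemma statement itself rather than a gap specific to your argument.
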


\begin{proof}
    Recall that the entrance of a farm is at the \southeast corner.
    Hence a point in the farm which is in distance at most $s$
    from the entrance must fall into the intersection of the farm with
    a circle of radius $s$ whose center is at the entrance of the
    farm, see figure.

    \parpic[r]{%
       \begin{minipage}{0.18\linewidth}%
           \includegraphics{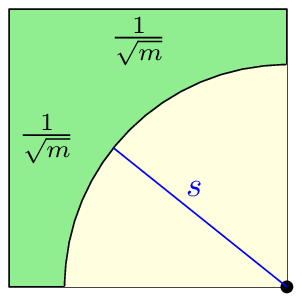}%
       \end{minipage}}

    Therefore, if the radius of the circle is less than the side
    length of the square, i.e. $s\leq 1/\sqrt{m}$, then the
    intersection is a quarter disk and so the area is exactly $\pi
    s^2/4$.  Otherwise, the top and left portions of the quarter disk
    will be needed to be clipped to the farm and so the area is $\leq
    \pi s^2/4$. Now, the probability of the $i$\th site to fall into
    this disk is $\leq \pi s^2/4$, and since we sample $m$ sites
    (independently), by the union bound the claim follows.
\end{proof}

\begin{lemma}
    Let $p$ and $e$ be the dominating point and exit, respectively, of
    a farm $f$.  Let $r=\distGeo{p}{e}$.  Let $f_i$ be a farm which is
    $i$ farms away from $f$ (either in north or in south direction), and let
    $X_i$ be the number of points which fell into $f_i$.  Let $\alpha_{X_i}$
    denote the probability that no point from $f_i$ eliminates $p$
    (see \defref{eliminates}). Then $\alpha_{X_i} \geq
    \exp\pth{-m(r-i/m)^2 X_i/2}$.

    \lemlab{alpha}
\end{lemma}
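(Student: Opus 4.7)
The plan is to condition on the random variable $X_i$ and exploit a standard fact about uniform sampling: given that $X_i=k$ of the $m$ samples have landed in $f_i$, those $k$ samples are mutually independent and uniformly distributed in $f_i$. If I write $\beta$ for the probability that a single uniform point $q$ in $f_i$ eliminates $p$, then the events for distinct samples are independent conditional on $X_i$, so
\[
\alpha_{X_i} \;=\; (1-\beta)^{X_i}.
\]
The work then reduces to bounding $\beta$ above and linearizing the product.

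For the bound on $\beta$, I would first translate the elimination condition of Observation~\obsref{ifarms}, namely $\distGeo{q}{e_q} < r - i/m$, into a planar inequality inside $f_i$. Because every road in the construction is a narrow one-dimensional corridor and all roads share the same length $r_i = w + 1/\sqrt{m}$, the shortest path from any $q\in f_i$ to the exit $e_q$ is forced to leave the farm through the \southeast corner (the entrance) and then traverse the road, giving
\[
\distGeo{q}{e_q} \;=\; \bigl\|q - \mathrm{entrance}(f_i)\bigr\| + r_i.
\]
Consequently, the set of eliminating points in $f_i$ is contained in a (possibly clipped) quarter-disk centered at the entrance of radius at most $r - i/m$; if this quantity is non-positive the event is vacuous and $\alpha_{X_i}=1$. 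By the same quarter-disk computation used in the proof of Lemma~\lemref{distance}, the area of this region is at most $\pi(r-i/m)^2/4$, and dividing by the farm area $1/m$ yields $\beta \leq \pi m(r-i/m)^2/4$.

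To finish, I would apply the elementary inequality $1-x \geq e^{-2x}$ on $x\in[0,1/2]$ to obtain $\alpha_{X_i} \geq \exp(-2\beta X_i) \geq \exp\bigl(-\tfrac{1}{2}\, m(r-i/m)^2 X_i\bigr)$ after absorbing the factor $\pi/2$ (or tracking it as a tame constant that is harmless downstream). The main obstacle is the geometric step: one must argue carefully that the narrow-road construction really does force the geodesic from a point $q\in f_i$ to the exit $e_q$ to pass through the entrance, so that $\distGeo{q}{e_q}$ decomposes as a planar distance plus a fixed road length; and one must handle the corner cases $r - i/m \leq r_i$ (where $\beta = 0$ and the inequality is trivial) and $\beta > 1/2$ (where the claimed exponential lower bound is vacuous relative to the trivial inequality $\alpha_{X_i}\geq 0$, so it holds automatically). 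The remainder is bookkeeping.
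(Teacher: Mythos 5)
Your proposal follows the paper's proof essentially step for step: condition on $X_i$, note the points are then i.i.d.\ uniform in $f_i$, bound the per-point elimination probability by the quarter-disk area $\pi(r-i/m)^2/4$ divided by the farm area $1/m$ (exactly the computation in \lemref{distance}), and pass from $(1-\beta)^{X_i}$ to $\exp(-2\beta X_i)$. You make explicit two points the paper leaves implicit inside the phrase ``arguing as in \lemref{distance}'' --- namely that the geodesic from $q\in f_i$ to $e_q$ decomposes as planar distance to the entrance plus the fixed road length (so the road lengths cancel when comparing against $r$), and the degenerate case $r-i/m\leq r_i$ --- but these are clarifications of the same argument, not a different route.
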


\begin{proof}
    Let $q_1,\dots, q_{X_i}$ be the $X_i$ points that fall into farm
    $f_i$.  Let $e_i$ be the exit of $f_i$, and let $d_j =
    \distGeo{q_j}{e_i}$, for $j=1,\dots,X_i$.  Arguing as in
    \lemref{distance}, we have that $\Prob{d_j\leq s} \leq s^2\pi/4$
    for all $j$.  By \obsref{ifarms}, a point $q_j$ eliminates $p$ if
    and only if $d_j< r-i/m$.  For $j \neq l$, whether or not $q_j$ or
    $q_l$ kill $p$ are independent events and hence
    \begin{align*}
        \alpha_{X_i} &=%
        \prod_{j=1}^{X_i} \Prob{ \MakeBig d_j\geq r-i/m}%
        \geq%
        \pth[]{1 - \frac{(r-i/m)^2 \frac{\pi}{4}} {\mathrm{area}(f)}
        }^{X_i}%
        \geq%
        \exp\pth{-m(r-i/m)^2\frac{\pi X_i}{2}}.
    \end{align*}
    \aftermathA
\end{proof}

\begin{lemma}
    Let $p$ and $e$ be the dominating point and exit, respectively, of
    some farm $f$.  Let $r=\distGeo{p}{e}$.  Let $X_i$ (resp. $Y_i$),
    for $i=1,\dots, \floor{ r m }$, denote the number of points which
    fall into the farm which is $i$ farms to the north (resp. south),
    from $f$ in the order of the exits along the first ridge.  Then
    the probability that $p$ is not eliminated given the values of
    $X_i$ and $Y_i$ for all $i$, is at least
    \begin{align*}
        \exp\pth{-m\sum_{i=1}^{\floor{ rm }} (r-i/m)^2\pi (X_i+Y_i)/2}
    \end{align*}%
    
    \lemlab{given}
\end{lemma}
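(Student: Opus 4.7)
The plan is to derive \lemref{given} by combining \lemref{alpha} with an independence argument across farms, restricting attention to those farms that are close enough to potentially eliminate $p$.

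First I would observe that, by \obsref{ifarms}, a point $q$ sitting in a farm that is $i$ exits away from $f$ can only eliminate $p$ if its geodesic distance to its own exit is strictly less than $r - i/m$. In particular, any farm with $i/m \geq r$, i.e.\ $i > \floor{rm}$, contributes no threat whatsoever to $p$, so it suffices to bound the probability that no point in any of the $2\floor{rm}$ farms within distance $\floor{rm}$ of $f$ (either north or south along the first ridge) eliminates $p$.

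Next I would invoke \lemref{alpha} twice for each such $i$, once for the $i$-th farm to the north (giving probability at least $\exp\!\pth{-m(r-i/m)^2\pi X_i/2}$ of not killing $p$) and once for the $i$-th farm to the south (giving probability at least $\exp\!\pth{-m(r-i/m)^2\pi Y_i/2}$). The key structural fact is that, conditional on the counts $X_i$ and $Y_i$, the locations of the sampled points within distinct farms are independent, since the sample is i.i.d.\ uniform and the farms are pairwise disjoint; hence the events ``no point in farm $i$ north eliminates $p$'' and ``no point in farm $j$ south eliminates $p$'' (for any $i,j$) are mutually independent given $\{X_i\},\{Y_i\}$.

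Finally, I would multiply the lower bounds over all $i=1,\dots,\floor{rm}$ and both directions, and use $\prod_i \exp(a_i)=\exp(\sum_i a_i)$ to collapse the product into the single exponential
\begin{align*}
    \exp\!\pth{-m\sum_{i=1}^{\floor{rm}}(r-i/m)^2 \pi (X_i+Y_i)/2},
\end{align*}
which is exactly the bound claimed. The only subtlety is justifying conditional independence across farms and explaining why farms beyond the $\floor{rm}$-th contribute a factor of $1$ to the bound; both follow immediately from \obsref{ifarms} and the product structure of the uniform sample, so the main ``obstacle'' is really just bookkeeping rather than a genuine difficulty.
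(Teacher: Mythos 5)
Your proposal matches the paper's proof essentially step for step: ignore farms with $i > \floor{rm}$ since by \obsref{ifarms} they pose no threat, apply \lemref{alpha} separately to each north and south farm, invoke conditional independence across disjoint farms given the counts, and multiply the exponential bounds. Nothing to flag.
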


\begin{proof}
    First note that for $i'> \floor{ rm }$, we have that $i'/m \geq
    (\floor{ rm } + 1) / m > r$.  Namely no point from a farm $i'$
    farms away can kill $p$, and hence we can ignore such farms.
    
    Given the value $X_i$ and $Y_i$ for all $i$, whether a farm
    contains a point which eliminates $p$ is independent from whether
    any other farm contains a point which eliminates $p$. Therefore,
    by \lemref{alpha},
    \begin{align*}
        &\hspace{-1cm} \Prob{\MakeBig \text{$p$ is alive} \sep{ X_1,
              \dots, X_{\floor{ rm }}, Y_1, \dots, Y_{\floor{ rm }}}}%
        =%
        \pth{\prod_{i=1}^{\floor{ rm }} \alpha_{X_i}}
        \pth{\prod_{i=1}^{\floor{ rm }} \alpha_{Y_i}} \\
        &\geq \pth{\prod_{i=1}^{\floor{ rm }} \exp\pth{-m(r-i/m)^2\pi
              X_i/2} } \pth{\prod_{i=1}^{\floor{ rm }}
           \exp\pth{-m(r-i/m)^2\pi Y_i/2}}%
        \\%
        &=%
        \exp\pth{-m\sum_{i=1}^{\floor{ rm }} (r-i/m)^2 \pi(X_i+Y_i)/2}
    \end{align*}
    \aftermathA
\end{proof}

\begin{lemma}
    Let $p$ and $e$ be the dominating point and exit, respectively, of
    some farm $f$, and let $r=\distGeo{p}{e}$.  Then the probability
    that $p$ is alive is $\geq \frac{1}{2}\exp\pth{-2r^3m^2}$.
    
    \lemlab{alive}
\end{lemma}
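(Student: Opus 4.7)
The plan is to take expectations of the conditional bound from Lemma~\ref{lemma:given} and then compress the resulting exponent by recognizing it as the mean of a positive random variable, to which Lemma~\ref{lemma:exp} applies.

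More concretely, I would first define the positive random variable
\begin{align*}
    Z \;=\; \frac{m\pi}{2}\sum_{i=1}^{\floor{rm}}(r-i/m)^2(X_i+Y_i),
\end{align*}
so that Lemma~\ref{lemma:given} says $\Prob{p\text{ alive}\mid X_1,\ldots,Y_{\floor{rm}}}\ge e^{-Z}$. Taking expectation on both sides and using the tower property gives $\Prob{p\text{ alive}}\ge \Ex{e^{-Z}}$. The key observation is that each $X_i$ (resp.\ $Y_i$) counts how many of the $m$ i.i.d.\ uniform samples land in a particular farm; since each farm has area $1/m$, linearity of expectation yields $\Ex{X_i}=\Ex{Y_i}=1$, and therefore
\begin{align*}
    \Ex{Z} \;=\; m\pi\sum_{i=1}^{\floor{rm}}(r-i/m)^2.
\end{align*}

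Next, I would bound this sum by an integral. Since $(r-x/m)^2$ is decreasing in $x$ on the relevant range, the standard comparison gives
\begin{align*}
    \sum_{i=1}^{\floor{rm}}(r-i/m)^2 \;\le\; \int_0^{rm}(r-x/m)^2\,dx \;=\; m\int_0^{r}u^2\,du \;=\; \frac{mr^3}{3},
\end{align*}
so $\Ex{Z}\le \pi m^2 r^3/3$. Finally, I would apply Lemma~\ref{lemma:exp} to $Z$ to obtain $\Ex{e^{-Z}}\ge \tfrac12 e^{-2\Ex{Z}}\ge \tfrac12\exp(-2\pi m^2 r^3/3)$, which is at least $\tfrac12\exp(-2r^3m^2)$ up to the constant in the exponent (and the paper can absorb the mild discrepancy by using the slightly sharper inequality $(1-y)^n\ge e^{-ny/(1-y)}$ inside the proof of Lemma~\ref{lemma:alpha}, trading the factor of $2$ for a factor arbitrarily close to $1$ when $y$ is small).

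The only slightly subtle point—and the step I would be most careful with—is the independence argument that licenses working directly with the product of the $\alpha_{X_i}$ and $\alpha_{Y_i}$ in Lemma~\ref{lemma:given} when we then take expectation over the $X_i,Y_i$: since those counts come from disjoint farms, they are independent (in fact, their joint distribution is multinomial over disjoint regions), so the conditional bound genuinely averages as above. Everything else is a routine integral estimate followed by a direct application of the two helper lemmas, so I do not anticipate any real obstacle beyond tracking constants.
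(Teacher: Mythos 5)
Your proof takes the same route as the paper: apply \lemref{given}, pass to the unconditional probability via the tower property, bound $\Ex{T}$ by linearity using $\Ex{X_i},\Ex{Y_i}\le 1$, and finish with \lemref{exp}. The one substantive difference is that you bound $\sum_{i\le\floor{rm}}(r-i/m)^2$ by the integral $mr^3/3$, where the paper uses the cruder estimate $\floor{rm}\cdot r^2\le mr^3$ (and also silently drops a factor of $\pi$ in that step). Note, though, that your more careful accounting yields the exponent $2\pi/3\approx 2.09$, which is \emph{larger} than $2$, so the bound you actually derive is marginally weaker than the stated one, not stronger as your phrase ``which is at least'' suggests; you do flag this, your proposed patch is reasonable, and in any case \obsref{smallr} only needs a bound of the form $\exp(-\Theta(m^2r^3))$, so the precise constant is immaterial. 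Finally, the independence worry at the end is misplaced: the tower property requires no independence among the $X_i,Y_i$ (and in fact the multinomial counts over disjoint farms are negatively correlated, not independent); the independence that \lemref{given} genuinely rests on is the conditional independence of within-farm point locations given the counts, which you need not reargue at this stage.
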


\begin{proof}
    Let $X_i$ and $Y_i$, for $i=1,\dots,\floor{ rm }$, be random
    variables equal to the number of points which fall into the farm
    which is $i$ farms to the north or south, respectively, from $f$
    in the order of the exits along the first ridge
    (note that if there is no farm $i$ farms to the north or south,
    then $X_i=0$ or $Y_i=0$, respectively).
    
    \lemref{given} tells us that $\Prob{p \text{ is alive} \sep{ X_1,
          \dots, X_{\floor{ rm }}, Y_1, \dots, Y_{\floor{ rm }}}} \geq
    e^{-T}$, where $T=\sum_{i=1}^{\floor{ rm }} (r-i/m)^2
    \pi(X_i+Y_i)/2$.  Since the area of each farm is $1/m$, We know
    that $\Ex{X_i} \leq 1$ and $\Ex{Y_i } \leq 1$ for
    $i=1,\dots,\floor{ rm }$ (``$\leq 1$" is used instead of ``$=1$"
    since there might not be a farm at that distance).  Therefore,
    \begin{align*}
        \Ex{T}%
        &=%
        \Ex{\sum_{i=1}^{\floor{ rm }} m(r-i/m)^2 \pi(X_i+Y_i)/2} =%
        \sum_{i=1}^{\floor{ rm }} m(r-i/m)^2 \pi (\Ex{X_i} +
        \Ex{Y_i})/2\\%
        &\leq%
        \sum_{i=1}^{\floor{ rm }} m(r-i/m)^2 \leq m(r^2m) = r^3m^2
    \end{align*}
    
    By \lemref{exp},
    $\ds \Prob{ \MakeBig p \text{ is alive}}%
    \geq%
    \Ex{e^{-T}} \geq \frac{1}{2}\exp\pth{-2\Ex{T}}%
    \geq%
    \frac{1}{2}\exp\pth{-2r^3m^2}.  $
    
    \aftermathA
\end{proof}

\begin{observation}
    \lemref{alive} implies that if $r\leq 1/m^{2/3}$ then %
    \begin{align*}
        \Prob{p \text{ is alive}}%
        \geq%
        \frac{1}{2}\exp\pth{-2r^3m^2} \geq \frac{1}{2e^2}.
    \end{align*}
    
    \obslab{smallr}
\end{observation}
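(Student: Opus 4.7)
The plan is essentially a one-line computation that simply plugs the hypothesized bound on $r$ into the conclusion of \lemref{alive}. First I would invoke \lemref{alive} verbatim to obtain
\begin{align*}
    \Prob{p \text{ is alive}} \geq \frac{1}{2}\exp\pth{-2r^3 m^2}.
\end{align*}
Then I would observe that the hypothesis $r \leq 1/m^{2/3}$ immediately gives $r^3 \leq 1/m^2$, hence $r^3 m^2 \leq 1$, so $2 r^3 m^2 \leq 2$. Since the exponential is monotone,
\begin{align*}
    \exp\pth{-2r^3 m^2} \geq \exp(-2) = \frac{1}{e^2},
\end{align*}
and multiplying by $1/2$ yields the claimed lower bound of $1/(2e^2)$.

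There is no real obstacle here; the observation is just recording the fact that \lemref{alive} becomes a useful constant lower bound precisely when $r$ is at most $m^{-2/3}$, which is the scale that will drive the $\Omega(n m^{2/3})$ lower bound in the sequel. The only thing to be careful about is that the chain of inequalities written in the statement must be presented exactly in the order $\tfrac{1}{2}\exp(-2r^3m^2) \geq \tfrac{1}{2e^2}$, with the middle step being the substitution $r^3 m^2 \leq 1$. No further facts from the paper are needed beyond \lemref{alive} itself.
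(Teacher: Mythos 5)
Your proposal is correct and matches what the paper intends: the observation is stated without a separate proof precisely because it is the immediate substitution $r \leq m^{-2/3} \Rightarrow r^3 m^2 \leq 1 \Rightarrow \exp(-2r^3m^2) \geq e^{-2}$ applied to the bound from \lemref{alive}. There is no daylight between your argument and the paper's.
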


\begin{lemma}
    The probability that a farm $f$ gives rise to a Voronoi cell that
    is not eliminated is $\geq \pi/(8e^2m^{1/3})$.
    
    \lemlab{alive2}
\end{lemma}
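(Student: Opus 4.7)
The plan is to combine \lemref{distance} (a tail bound on the distance $r(f)$ from the closest sample in $f$ to the farm's entrance) with \obsref{smallr} (which, whenever this distance is small, lower-bounds the survival probability of the dominating point by a constant). The product of the resulting two estimates will give the desired $\Omega(m^{-1/3})$ bound.

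First I would set $s = 1/m^{2/3}$ in \lemref{distance}. Since $1/m^{2/3} \leq 1/\sqrt{m}$ for every $m \geq 1$, the equality case of \lemref{distance} applies and gives
\begin{align*}
    \Prob{r(f) \leq 1/m^{2/3}}
    \;=\;
    m \cdot \frac{\pi}{4} \cdot \frac{1}{m^{4/3}}
    \;=\;
    \frac{\pi}{4 m^{1/3}}.
\end{align*}
On this event the farm is non-empty (otherwise $r(f) = \mysqrt{2/m} > 1/m^{2/3}$ for $m \geq 1$), so the dominating point $p$ exists, and the parameter $r$ appearing in \lemref{alive} and \obsref{smallr} is at most $1/m^{2/3}$. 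Note that the road length is common to every farm and therefore cancels in the elimination comparison of \obsref{ifarms}, so $r$ effectively coincides with the intra-farm quantity $r(f)$ controlled by \lemref{distance}.

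Next I would apply \obsref{smallr} on this event: whenever $r \leq 1/m^{2/3}$, the dominating point is alive with probability at least $1/(2e^2)$. Because the samples that fall into farms other than $f$ are independent of the samples in $f$, conditioning on $\{r(f) \leq 1/m^{2/3}\}$ (an event determined purely by the samples in $f$) does not weaken this survival bound. Multiplying the two estimates yields
\begin{align*}
    \Prob{f \text{ gives rise to a non-eliminated Voronoi cell}}
    \;\geq\;
    \frac{\pi}{4 m^{1/3}} \cdot \frac{1}{2 e^2}
    \;=\;
    \frac{\pi}{8 e^2 m^{1/3}},
\end{align*}
as claimed.

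The only subtle point is the independence argument used in the multiplication step, and the matching of the ``$r$'' notation between \lemref{distance} and \obsref{smallr}. Both issues are immediate from the independence of the sampling inside distinct farms (already implicit in the proof of \lemref{alive}) and the fact that the road length is identical for all farms; no additional calculation is required.
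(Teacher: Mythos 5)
Your proposal is correct and follows essentially the same route as the paper: compute $\Prob{r(f)\leq m^{-2/3}}=\pi/(4m^{1/3})$ from the equality case of \lemref{distance}, apply \obsref{smallr} conditioned on that event, and multiply. You additionally make explicit two points the paper leaves implicit (that the farm is automatically non-empty on the event, and that the common road length cancels so the parameter in \lemref{alive} effectively reduces to the intra-farm distance of \lemref{distance}); these are correct observations but not substantive departures.
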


\begin{proof}
    By the equality in \lemref{distance} for short distances, we know
    that $\Prob{ r(f) \leq s } = ms^2\pi/4$, for $0\leq s\leq
    1/\sqrt{m}$.  Let $p$ be the dominating point of $f$. By
    \obsref{smallr} and \lemref{distance}, we have
    \begin{align*}
        \Prob{ p \text{ is alive}}%
        &\geq%
        \Prob{\MakeBig \pth{r(f) \leq m^{-2/3}} \cap \pth{
              p\text{ is alive}}}%
        =%
        \Prob{\MakeBig { \; p\text{ is alive}}
           \sep{ r(f) \leq m^{-2/3}}} \Prob{  r(f) \leq m^{-2/3} }\\
        &\geq%
        \frac{1}{2e^2} \Prob{\MakeBig r(f) \leq m^{-2/3}} %
        \leq%
        \frac{1}{2e^2} \cdot \frac{\pi m}{4m^{4/3}} =
        \frac{\pi}{8e^2m^{1/3}}
    \end{align*}
    \aftermathA
\end{proof}

\begin{theorem}
    In expectation, the Voronoi diagram will be of complexity
    $\Omega\pth{nm^{2/3}}$.
    
    \thmlab{lower:bound}
\end{theorem}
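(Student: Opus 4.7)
The plan is to combine the per-farm survival probability from \lemref{alive2} with the fact that a farm contributing a surviving dominating point forces its Voronoi cell to sweep across the entire sequence of ridges, much as in \lemref{lower:bound:simple}.

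First I would count farms. Since the grid has dimensions $M \times M$ with $M = \floor{\sqrt{m}/4}$, there are $M^2 = \Theta(m)$ farms in total. For each farm $f$, \lemref{alive2} asserts that the probability that $f$ gives rise to a (non-eliminated) dominating point is at least $\pi/(8 e^2 m^{1/3})$. Let $N$ denote the number of farms whose dominating point survives. By linearity of expectation,
\begin{align*}
    \Ex{N}
    \;\geq\;
    \Theta(m) \cdot \frac{\pi}{8 e^2 m^{1/3}}
    \;=\;
    \Omega\pth{m^{2/3}}.
\end{align*}

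Next I would argue that each surviving dominating point contributes $\Omega(n)$ to the combinatorial complexity of $\VD{\Sites}$. The point is essentially the same as in \lemref{lower:bound:simple}: the construction forbids sites outside the farms (the area off the farms, roads, and ridges has height zero and is effectively excluded by height $h$ being large), so the only sites that could prevent a dominating point $p$ from reaching the rightmost ridge are dominating points of \emph{other} farms. By definition, if $p$ is alive then no such point beats $p$ at its own exit on the leftmost ridge, and so by the same geodesic-length comparison used in \secref{farming} (the shared road length $1/\sqrt{m} + w$ and exponentially small ridge separation $1/(c 2^n)$ make the ridge region dominated by whoever wins the race to the first exit), the Voronoi cell of $p$ extends across all $\Theta(n)$ ridges. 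Each such traversal contributes $\Omega(n)$ intersections of a chord of a Voronoi edge with the ridge triangulation, and these intersection sets are disjoint across distinct surviving farms since the cells themselves are disjoint.

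Putting these together, the expected combinatorial complexity is at least a constant times $\Ex{N}\cdot n = \Omega\pth{n m^{2/3}}$, completing the proof.

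The main obstacle, which the preceding lemmas already handle, is exactly the independence/variance balancing inside \lemref{alive} and \lemref{alive2}: once the $\Omega(1/m^{1/3})$ survival probability is in hand, the remaining steps are bookkeeping in linearity of expectation and a geometric argument identical in spirit to the simpler $\Omega(n\sqrt{m})$ construction. The only care needed is to confirm that different surviving cells contribute disjoint chord–triangle intersections on the ridges, which follows because Voronoi cells partition the surface and each surviving cell independently crosses all $\Theta(n)$ ridges.
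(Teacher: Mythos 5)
Your proposal follows the paper's proof essentially verbatim: count the $\Theta(m)$ farms, apply \lemref{alive2} with linearity of expectation to get $\Omega(m^{2/3})$ alive dominating points, and charge $\Omega(n)$ disjoint chord-ridge intersections to each such point's Voronoi cell. The extra paragraph you include justifying why surviving cells reach the rightmost ridge and why their contributions are disjoint is implicit in the paper's terser treatment but does not change the argument.
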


\begin{proof}
    Every farm which receives a point from the random sample has a
    dominating point.  Since $\Theta\pth{m}$ farms were built, and
    each farm receives one point in expectation, the expected number
    of dominating points is $\Theta\pth{m}$.  Therefore, by
    \lemref{alive2}, the expected number of alive dominating points is
    $\Omega\pth{m*(\pi/(8e^{2}m^{1/3}))} = \Omega\pth{m^{2/3}}$.
    
    Given that a dominating point is alive, the probability that its
    Voronoi cell does not reach the rightmost ridge is negligible.
    Therefore, in expectation, if there are $\Omega\pth{m^{2/3}}$
    alive dominating points and $\Theta\pth{n}$ ridges then the
    complexity of the Voronoi diagram will be $\Omega\pth{nm^{2/3}}$.
\end{proof}

\section{Conclusions}
\seclab{conclusions}

We investigated the expected combinatorial complexity of geodesic
Voronoi diagrams on polyhedral terrains in two settings where the
sites are being picked randomly. Usually, such random settings are the
great simplifier -- for example, the expected complexity of the convex
hull of $n$ points picked uniformly in the unit square is $O( \log n)$
-- but in our case the situation is considerably more subtle.

We proved that the expected complexity is linear if one assumes low
density and bounded slope and the domain of the terrain is a unit
square.  On the other hand, we described a worst-case construction of a
terrain which implies a super-linear lower bound on the expected
complexity if these assumptions are dropped. This implies that the
probabilistic analysis alone does not yield a linear complexity.

\parpic[r]{\includegraphics[width=0.2\linewidth]{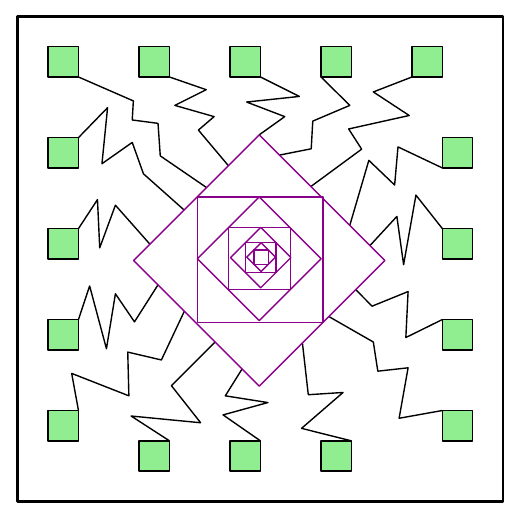}}

There are still many interesting open question for further research.
In particular, if we relax the realistic input assumptions, is the
expected complexity still linear or can one show other lower bounds?
One could, for instance allow the terrain to have a constant number of
triangles, where the slope is unbounded, or drop the steepness
assumption completely.  Consider the farming layout to the right,
where the sequence of ridges have been replaced by a recursive low
density construction. One problem with this construction is that the
low density assumption requires the ridges to have a non-negligible
area, which could catch points from the random sample.

\bibliographystyle{alpha}%
\bibliography{shortcuts,geometry}

\appendix

\section{How fat are Random Voronoi Cells?}
\apndlab{fat:cells}

In light of the known results on Poisson Voronoi Diagrams
\cite{obsc-stcav-00}, the results in this appendix are not too
surprising. We include our analysis here because it is relatively
simple and it is self contained. We are unaware of any work analyzing
the fatness of Voronoi cells. Furthermore, our sampling model is
different than the one used in the Poisson Voronoi Diagrams.

Let $\PntSet$ be a set of $m$ points picked uniformly from the unit
square $[0,1]^2$. To simplify the presentation, we assume the square
has the torus topology (i.e., identifying together facing edges). Let
$\pnt_1$ be the first sampled point, and let $\PntSet = \brc{\pnt_1,
   \ldots, \pnt_m}$. In the following, let $\VorCell{\pnt_1}{\PntSet}$
denote the Voronoi cell of $\pnt_1$ in the Voronoi diagram of
$\PntSet$.

\begin{lemma}
    For any integer $j > 0$, with probability $\geq 1-6 e^{1-j}$, the
    diameter of $\Cell = \VorCell{\pnt_1}{\PntSet}$ is bounded by $R_j
    = 4\cdot 3^{-1/4} \, \mysqrt{j/(m-1)}$.
    
    \lemlab{upper:bound}
\end{lemma}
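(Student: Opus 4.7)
The plan is to arrange six disks $D_1,\dots,D_6$ hexagonally around $\pnt_1$, chosen so that the following implication holds: \emph{whenever each $D_i$ contains at least one of the remaining sample points $\pnt_2,\dots,\pnt_m$, the diameter of $\Cell$ is at most $R_j$.} A union bound over the events ``$D_i$ is empty'' will then yield the claim. The starting observation is that $q \in \Cell$ if and only if the open disk $B(q,\|q-\pnt_1\|)$ contains none of $\pnt_2,\dots,\pnt_m$, and since $\pnt_1 \in \Cell$, the triangle inequality gives $\mathrm{diam}(\Cell) \leq 2\max_{q\in \Cell}\|q-\pnt_1\|$, so it suffices to bound this maximum by $R_j/2$.

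For parameters $\rho, r_0 > 0$ to be chosen, let $c_i = \pnt_1 + \rho(\cos(i\pi/3),\sin(i\pi/3))$ and $D_i = B(c_i, r_0)$. The geometric heart of the argument is: with $r_0 := \rho/\sqrt{3}$, for every $q$ with $r := \|q-\pnt_1\| \geq 2\rho/\sqrt{3}$ there exists some $i$ such that $D_i \subseteq B(q,r)$. To see this, pick $i$ so that the angle $\angle(q-\pnt_1, c_i-\pnt_1)$ is at most $\pi/6$; the law of cosines then yields $\|q-c_i\|^2 \leq r^2 + \rho^2 - r\rho\sqrt{3}$, and a short algebraic check shows that combining $r_0 = \rho/\sqrt{3}$ with $r \geq 2\rho/\sqrt{3}$ is exactly enough to conclude $\|q-c_i\| \leq r - r_0$, i.e.\ $D_i \subseteq B(q,r)$. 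Consequently, if every $D_i$ hosts a sample point, no $q$ with $\|q-\pnt_1\| \geq 2\rho/\sqrt{3}$ can lie in $\Cell$, so $\mathrm{diam}(\Cell) < 4\rho/\sqrt{3}$. Setting $\rho := 3^{1/4}\sqrt{j/(m-1)}$ makes this bound exactly $R_j$.

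For the probability estimate, each $D_i$ has area $\pi r_0^2 = \pi j/(\sqrt{3}(m-1))$, and since $\pnt_2,\dots,\pnt_m$ are uniform and independent on the unit torus (so there is no boundary clipping to shrink a disk), the probability that any particular $D_i$ contains none of them is at most $(1-\pi r_0^2)^{m-1} \leq \exp(-\pi j/\sqrt{3}) \leq e^{-j} \leq e^{1-j}$, where we used $\pi/\sqrt{3} > 1$. A union bound over the six disks then gives a failure probability of at most $6 e^{1-j}$, which is exactly the stated bound.

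The main obstacle is calibrating the geometric construction: the choice $r_0 = \rho/\sqrt{3}$ makes the containment criterion tight at the threshold $r = 2\rho/\sqrt{3}$ (attained when the angle to the nearest $c_i$ is exactly $\pi/6$), and this tightness is what produces the precise constant $4 \cdot 3^{-1/4}$ appearing in $R_j$. The torus assumption plays the essential role of letting us treat each $D_i$ as fully inside the domain, so that neither the area computation nor the hexagonal covering argument degrades near the unit square's boundary.
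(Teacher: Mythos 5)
Your proof is correct, and the sixfold-symmetry-plus-union-bound strategy is the same as the paper's, but the geometric decomposition you use is genuinely different. The paper partitions the plane around $\pnt_1$ into six $60^\circ$ cones, cuts each cone into tiles of area $1/(m-1)$ by lines perpendicular to the cone's axis, and observes that the first $j$ tiles form an equilateral triangle of side length $\ell_j = 2\cdot 3^{-1/4}\sqrt{j/(m-1)}$ and area $j/(m-1)$; if each triangle contains a sample point, the $60^\circ$ opening angle immediately forces every point of that cone farther than $\ell_j$ from $\pnt_1$ to be closer to the witness, so $\Cell\subseteq B(\pnt_1,\ell_j)$ and the diameter is at most $2\ell_j=R_j$. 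You instead fix six disks of radius $\rho/\sqrt{3}$ arranged hexagonally at distance $\rho$ from $\pnt_1$ and show, via the law of cosines, that any $q$ far enough from $\pnt_1$ has one of these disks inside $B(q,\|q-\pnt_1\|)$. The two constructions are tightly coupled: your $2\rho/\sqrt{3}$ equals the paper's $\ell_j$, and both arrive at the identical bound $R_j$. Your disks have area $\pi j/(\sqrt{3}(m-1))$, a factor $\pi/\sqrt{3}>1$ larger than the paper's triangles, so your per-region failure probability $\exp(-\pi j/\sqrt{3})$ is in fact slightly smaller than the paper's $\exp(-j)$ before both are relaxed to $e^{1-j}$. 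The cone-plus-tile version makes the equal-area partition manifest (it uses the minimal area per region that the bound supports), whereas your version makes the eliminating ball-containment fully explicit, which some readers will find cleaner. One cosmetic caveat shared by both proofs: the estimate $(1-a)^{m-1}\leq e^{-a(m-1)}$ tacitly assumes $a\leq 1$, which fails when $j$ is large relative to $m$; in that regime $R_j$ already exceeds the diameter of the torus and the lemma is vacuous, so nothing is lost, but neither write-up says so.
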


\begin{proof}
    Partition the plane around $\pnt_1$ into $6$ equally spaced cones,
    of 60 degrees each. Consider such a cone, and break it into tiles
    of area $1/(m-1)$ each, by cutting it by parallel lines forming
    equal angles with both sides of the cone. In particular, let $T_i$
    be the $i$\th tile in this partition, with the tiles ordered in
    increasing order way from the center $\pnt_1$.
    
    \parpic[r]{\includegraphics{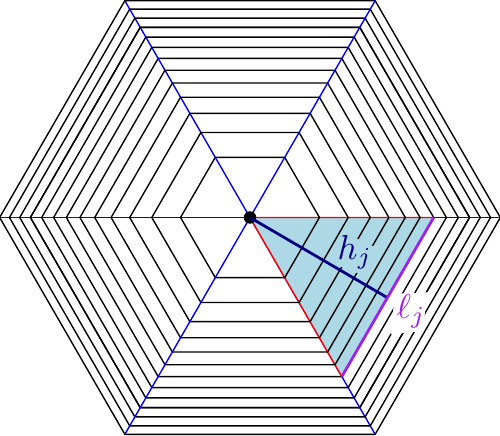}}
    
    The first $j$ tiles of this cone form an equilateral triangle of
    area $j/(m-1)$. As such, letting $\ell_j$ and $h_j$ be the edge
    length and height of this triangle, we have that $h_j =
    (\sqrt{3}/{2})\ell_j$, and its area is $h_j \ell_j /2 = \ell_j^2
    \sqrt{3}/4 = j/(m-1)$, which implies that $\ell_j = 2\cdot
    3^{-1/4} \mysqrt{j/(m-1)}$.
    
    Clearly, the probability that the first $j$ tiles would not
    contain any point of $\PntSet$ is $(1-j/(m-1))^{m-1} \leq
    \exp(-j(m-1)/(m-1)) \leq \exp(1-j)$.  As such, the probability
    that the first $j$ tiles in any of these six cones is empty is
    bounded by $6 \exp(1-j)$. Namely, with probability $\geq 1-6
    \exp(1-j)$, we have that all cones have a point in them in
    distance at most $\ell_j$ from $\pnt_1$. This implies that the
    Voronoi cell of $\pnt_j$ in such a cone in contained the in first
    $j$ tiles. As such, the diameter of this cell is bounded by
    $2\ell_j$, as claimed.
\end{proof}

\begin{lemma}
    Let $X$ be the distance from $\pnt_1$ to the second closest point
    to it in $\PntSet \setminus \brc{\pnt_1}$.  Let $r_i =
    \sqrt{\frac{1}{i (m-1)\pi}}$.  Let $X_i$ be an indicator variable
    that is one if and only if $X \in [r_{i+1}, r_{i}]$.  For $i \geq
    4$, we have $\Prob{X_i=1} \leq 1/i^2$.
    
    \lemlab{too:close}
\end{lemma}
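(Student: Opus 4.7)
The plan is to bound the event $\brc{X_i = 1} = \brc{X \in [r_{i+1}, r_i]}$ from above by the one-sided event $\brc{X \leq r_i}$, since the lower endpoint $r_{i+1}$ is not needed; the upper endpoint alone will already yield a strong enough estimate. By the definition of $X$ as the distance from $\pnt_1$ to its second-nearest neighbor in $\PntSet \setminus \brc{\pnt_1}$, the event $X \leq r_i$ says exactly that the closed disk of radius $r_i$ centered at $\pnt_1$ contains at least two points of $\PntSet \setminus \brc{\pnt_1}$.

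Next I would compute the probability on the torus. Provided $r_i \leq 1/2$, the geodesic disk of radius $r_i$ on the unit torus has area exactly $\pi r_i^2 = 1/(i(m-1))$, so each of the remaining $m-1$ sampled points independently lies in this disk with probability $p_i = 1/(i(m-1))$. The hypothesis $i \geq 4$ is used precisely to ensure $r_i \leq 1/2$ for any $m \geq 2$, so that the disk does not wrap around itself and the area formula $\pi r_i^2$ applies without correction. This is really the only step where one has to be a little careful.

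Finally, a plain pairwise union bound over the $\binom{m-1}{2}$ pairs of points in $\PntSet \setminus \brc{\pnt_1}$ will give a bound of the form $\Prob{X_i = 1} \leq \binom{m-1}{2} p_i^2$, and the arithmetic collapses to something of order $1/(2i^2)$, which is comfortably below the target $1/i^2$. No concentration inequality, Poisson approximation, or more refined combinatorial estimate is needed; the whole argument is short, and the main obstacle is at the bookkeeping level, namely verifying the torus-disk-area assumption used to turn the geometric event into a simple binomial statement.
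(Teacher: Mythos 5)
Your proof is correct, and it takes a cleaner route than the paper's. The paper's argument writes the exact binomial probability $\alpha_k = \binom{m-1}{k}p_i^k(1-p_i)^{m-1-k}$ of seeing $k$ points in the disk of radius $r_i$, uses $i\geq 4$ to show the ratio $\alpha_{k+1}/\alpha_k \leq 1/2$, and then bounds the tail $\sum_{k\geq 2}\alpha_k \leq 2\alpha_2 \leq 1/i^2$. You skip the geometric-series bookkeeping entirely: dropping the lower endpoint and applying a union bound over the $\binom{m-1}{2}$ pairs gives $\Prob{X\leq r_i}\leq \binom{m-1}{2}p_i^2 = \frac{m-2}{2(m-1)i^2}\leq 1/(2i^2)$ directly, which is in fact a factor of two better than the paper's bound. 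Both are correct; yours is more elementary and does not need the hypothesis $i\geq 4$ for the decay of the $\alpha_k$'s. The one point I would soften is the claim that $i\geq 4$ is ``precisely'' what guarantees $r_i\leq 1/2$: since $r_i=\sqrt{1/(i(m-1)\pi)}$, even $i\geq 2$ together with $m\geq 2$ already gives $r_i < 1/2$, so the constraint is looser than you state. (The paper itself never checks the torus-disk-area condition explicitly, so making it explicit as you do is a small improvement in rigor.)
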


\begin{proof}
    In the following, let $p_i = \pi r_i^2 = 1/(i (m-1))$ be the
    probability of a random point to fall inside the disk of radius
    $r_i$ centered at $\pnt$. The probability that this disk has
    exactly $k$ points in it is exactly
    \begin{align*}
        \alpha_k = \binom{m-1}{k} p_i^k \pth{1-p_i}^{m-1-k}.
    \end{align*}
    These probabilities fall quickly with $k$, and can be treated as a
    geometric series. Indeed,
    \begin{align*}
        \frac{\alpha_{k+1}}{\alpha_k}%
        =%
        \frac{k!  (m-1-k)! p_i }{(k+1)!(m-1-k-1)! (1-p_i)}%
        =%
        \frac{(m-1-k)p_i}{(k+1)(1-p_i)}%
        \leq %
        \frac{2(m-1-k)}{i(m-1)(k+1)} \leq \frac{1}{2},
    \end{align*}
    for $i \geq 4$. Now, we have
    \begin{align*}
        \Prob{X \in [r_i, r_{i+1}] \MakeBig }%
        &\leq%
        \Prob{ \cardin{\disk(\pnt, r_i) \cap \pth{\PntSet \setminus
                 \brc{\pnt_1}}} \geq 2\MakeBig }%
        =%
        \sum_{k\geq 2} \alpha_k%
        \leq%
        2 \alpha_2\\
        &\leq 2 \frac{(m-1)(m-2)}{2} \cdot \frac{1}{i^2 (m-1)^2} \leq
        \frac{1}{i^2},
    \end{align*}
    as desired.
\end{proof}

\begin{lemma}
    If $X_i=1$ then there is a disk of radius $r_{i+1}/4$ contained
    inside $\VorCell{\pnt_1}{\PntSet}$.
\end{lemma}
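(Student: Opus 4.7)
The plan is to exhibit a specific disk of radius $r_{i+1}/4$ and verify it fits inside $\VorCell{\pnt_1}{\PntSet}$. The starting observation is that $X_i=1$ forces $X\geq r_{i+1}$, where $X$ is the distance from $\pnt_1$ to the second closest point of $\PntSet\setminus\brc{\pnt_1}$. Hence at most one point of $\PntSet\setminus\brc{\pnt_1}$, call it $\pnt_2$, lies inside the open disk of radius $r_{i+1}$ around $\pnt_1$; all other sites are at distance at least $r_{i+1}$ from $\pnt_1$. Set $d:=\distX{\pnt_1}{\pnt_2}$ (if no such $\pnt_2$ exists, any of the constructions below with an arbitrary unit vector still works and the $\pnt_2$ check becomes vacuous).

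The key geometric idea is that the cell of $\pnt_1$ can be ``pinched'' only on the side facing $\pnt_2$, so I would shift the candidate disk slightly away from $\pnt_2$. Let $u$ be the unit vector from $\pnt_2$ to $\pnt_1$, and take $D$ to be the closed disk of radius $r_{i+1}/4$ centered at $c:=\pnt_1+(r_{i+1}/4)\,u$. It remains to show that $D$ lies on $\pnt_1$'s side of the bisector against every other site, which I would split into two checks.

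First, for any site $\pnt_j$ at distance at least $r_{i+1}$ from $\pnt_1$ (every site except possibly $\pnt_2$), the triangle inequality gives $\distX{x}{\pnt_1}\leq r_{i+1}/2$ and $\distX{x}{\pnt_j}\geq r_{i+1}/2$ for every $x\in D$, so $x$ sits in the correct half-plane. Second, for $\pnt_2$ when $d<r_{i+1}$: since $c,\pnt_1,\pnt_2$ are collinear with $\pnt_1$ strictly between $c$ and $\pnt_2$, the perpendicular distance from $c$ to the bisector of $\pnt_1\pnt_2$ equals $d/2+r_{i+1}/4\geq r_{i+1}/4$, so $D$ lies entirely on $\pnt_1$'s side of that bisector as well.

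The only real subtlety is the second check: when $d$ is very small the cell is extremely narrow on the $\pnt_2$ side, and one might worry no disk of radius $r_{i+1}/4$ fits. The shift by $(r_{i+1}/4)\,u$ is precisely what handles this — it yields just enough clearance that $D$ avoids the $\pnt_1\pnt_2$ bisector regardless of how small $d$ is, while still leaving enough slack in the triangle inequality (since the disk never reaches distance more than $r_{i+1}/2$ from $\pnt_1$) to clear every bisector against the far sites.
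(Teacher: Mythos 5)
Your proposal is correct and follows essentially the same argument as the paper: both shift the candidate disk of radius $r_{i+1}/4$ away from the nearest neighbor of $\pnt_1$ along the line through the two points, then dispose of all other sites by a triangle-inequality bound using $X\geq r_{i+1}$, and dispose of the nearest neighbor by noting its bisector with $\pnt_1$ is perpendicular to that line and lies on the far side of $\pnt_1$ from the disk. Your write-up merely makes explicit the two distance computations that the paper leaves as observations.
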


\begin{proof}
    Let $\pntA$ and $\pntB$ be the closest and second closest nearest
    neighbor to $\pnt_1$, respectively, in $\PntSet \setminus
    \brc{\pnt_1}$.
    
    \parpic[r]{\includegraphics{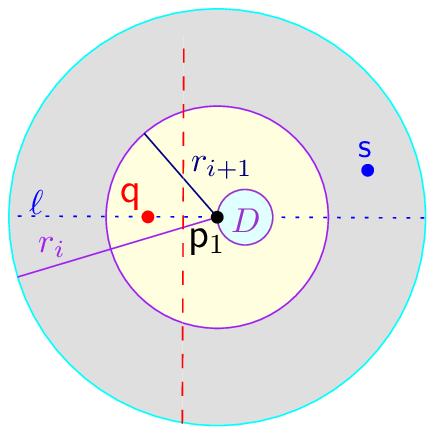}}
    
    Consider the line $\Line$ through $\pnt_1$ and $\pntA$, and place
    a point $\pntC$ on this line in distance $r_{i+1}/4$ from
    $\pnt_1$, on the side away from $\pntA$. We claim that the disk
    $D$ of radius $r_{i+1}/4$ centered at $\pntC$ is fully contained
    in $\VorCell{\pnt_1}{\PntSet}$. Indeed, observe that the Voronoi
    cells of all the points of $\PntSet$ in distance at least
    $r_{i+1}$ from $\pnt_1$ can not contain any point of $D$. As such,
    the only point that might be problematic is $\pntA$. But then,
    observe that the bisector between $\pnt_1$ and $\pntA$ is
    orthogonal to the line $\Line$, and it can not intersect $D$, thus
    implying the claim.
\end{proof}

The \emphi{fatness} of shape $X$ is $\fatX{X} = R(X)/ r(x)$, where
$R(x)$ (resp. $r(x)$) is the radius of the smallest (resp. largest)
disk enclosing (resp. enclosed in) $X$.

\begin{theorem}
    The fatness of $C= \VorCell{\pnt_1}{\PntSet}$ is constant, in
    expectation.
\end{theorem}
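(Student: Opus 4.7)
The plan is to bound the fatness by $\mathrm{diam}(C)/r(C)$ (since $C$ lies in the disk of radius $\mathrm{diam}(C)$ centered at any of its points, so $R(C) \leq \mathrm{diam}(C)$), and then to decouple the upper tail of $D := \mathrm{diam}(C)$ from the lower tail of $r(C)$ using Hölder's inequality. Both tails are controlled by the lemmas already proved, but the naive bound $\Ex{D/r(C)} \leq \mysqrt{\Ex{D^2}\Ex{r(C)^{-2}}}$ cannot succeed because $\Ex{r(C)^{-2}}$ turns out to be infinite; the exponents in Hölder must therefore be chosen asymmetrically.

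First I would extract moment bounds on $D$. \lemref{upper:bound} gives $\Prob{D > R_j} \leq 6e^{1-j}$ with $R_j = \Theta\pth{\mysqrt{j/(m-1)}}$, so for any fixed $p \geq 1$,
\begin{equation*}
    \Ex{D^p} \;=\; O\pth{(m-1)^{-p/2}},
\end{equation*}
since integrating the tail reduces to the convergent series $\sum_{j\geq 1} j^{p/2} e^{-j}$. Next I would extract lower-tail moment bounds on $r(C)$. Let $I \geq 0$ be the random index for which $X_I = 1$, with the convention that $I = 0$ covers the case $X \geq r_1$; the disk-in-cell argument extends verbatim to this boundary case, so $r(C) \geq r_{I+1}/4$ always. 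From \lemref{too:close} together with the trivial bound $\Prob{X_i = 1} \leq 1$ for $i < 4$, for any $q \in [1,2)$ we obtain
\begin{equation*}
    \Ex{r(C)^{-q}}
    \;\leq\; \sum_{i \geq 0} \pth{4/r_{i+1}}^q \Prob{X_i = 1}
    \;=\; O\pth{(m-1)^{q/2}} \sum_i \frac{(i+1)^{q/2}}{i^2},
\end{equation*}
where the last series converges precisely because $q < 2$.

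Finally, I would apply Hölder's inequality with conjugate exponents $p = 4$ and $q = 4/3$ to conclude
\begin{equation*}
    \Ex{\fatX{C}} \;\leq\; \Ex{D^4}^{1/4}\,\Ex{r(C)^{-4/3}}^{3/4}
    \;=\; O\pth{(m-1)^{-1/2}} \cdot O\pth{(m-1)^{1/2}} \;=\; O(1).
\end{equation*}
The main obstacle is exactly the failure of the symmetric $p = q = 2$ choice: $(1/r_{i+1})^2 \sim i(m-1)$ grows linearly in $i$, while $\Prob{X_i = 1} \sim 1/i^2$ is exactly critical, yielding a divergent harmonic sum. Taking $q$ strictly below $2$ restores convergence of the inradius moment, and the compensating increase in $p$ is essentially free because $D\mysqrt{m-1}$ has sub-Gaussian tails by \lemref{upper:bound}, so every moment of $D$ decays at the correct rate.
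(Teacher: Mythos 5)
Your proof is correct, and it takes a genuinely different route from the paper's. Both arguments rest on the same two ingredients -- the sub-Gaussian tail on the diameter from \lemref{upper:bound} and the polynomial tail on the inradius induced by \lemref{too:close} together with the disk-in-cell lemma -- but the way you combine them differs. The paper writes $\Ex{\fatX{C}}$ as a double sum over the indicator pairs $(X_i, Y_j)$ and splits the range of $j$ at a logarithmic threshold $L(i) = 10\lceil \ln i\rceil + 1$, bounding the joint probability by $\Prob{X_i = 1}$ below the threshold and by $\Prob{Y_j = 1}$ above it; convergence is then checked by hand in each regime. You instead extract clean moment bounds $\Ex{D^p} = O((m-1)^{-p/2})$ and $\Ex{r(C)^{-q}} = O((m-1)^{q/2})$ for $q < 2$, and decouple them via H\"older with the asymmetric conjugate pair $(4, 4/3)$; the $(m-1)$-powers cancel exactly. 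Your approach is more systematic and transparent about where the criticality lies -- the exponent $q = 2$ is exactly the threshold of convergence of $\sum_i (i+1)^{q/2}/i^2$, which you compensate for with higher diameter moments that are free because of the exponential tail -- and it yields $L^p$ bounds on the fatness as a byproduct. It is also slightly more careful than the paper in that you explicitly account for the event $X \geq r_1$ via the index $I = 0$; the paper's double sum tacitly omits this (harmless) case. One small caveat: your remark that ``$\Ex{r(C)^{-2}}$ turns out to be infinite'' overstates what the lemma shows -- \lemref{too:close} only gives an upper bound $1/i^2$ on $\Prob{X_i = 1}$, so it is the bound on $\Ex{r(C)^{-2}}$ that diverges, not necessarily the quantity itself -- but this is motivational commentary and does not affect the proof.
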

\begin{proof}
    Let $Y_j$ be an indicator variable that is one if and only if the
    diameter of $\VorCell{\pnt_1}{\PntSet}$ is in the range
    $\pbrc{R_{j-1},R_j}$. By \lemref{upper:bound}, we have that
    $\Prob{Y_j = 1} \leq 6 e^{1-j-1}$. Similarly, let $X_i$ be as in
    \lemref{too:close}. Note, that if $X_i=1$ and $Y_j=1$ then
    \begin{align*}
        F%
        =%
        \fatX{C} \leq \frac{R_j}{r_{i+1}/4}%
        \leq %
        \frac{4\cdot 3^{-1/4} \, \mysqrt{j/(m-1)}}%
        { \sqrt{4/ \pth{(i+1) (m-1)\pi}} } \leq %
        \frac{2\cdot 3^{-1/4} \, \mysqrt{j(i+1)}}%
        { \sqrt{1/ \pi} }%
        \leq 
        4 \sqrt{j(i+1)}.
    \end{align*}
    As such, we have $\Ex{ \fatX{C}} \leq \sum_{i=1}^\infty
    \sum_{j=1}^\infty \Prob{ \pth[]{X_i =1} \cap \pth[]{Y_j = 1}} 4
    \sqrt{j(i+1)}$. Setting $L(i) = 10 \ceil{\ln i} + 1$, we have that
    \begin{align*}
        \Ex{ \fatX{C}}%
        &\leq%
        \sum_{i=1}^\infty \sum_{j=1}^{L(i)} \Prob{ X_i =1} 4
        \sqrt{j(i+1)}%
        +%
        \sum_{i=1}^\infty \sum_{j=L(i) + 1}^\infty \Prob{ Y_j =1} 4
        \sqrt{j(i+1)}\\
        &\leq%
        \sum_{i=1}^\infty \frac{1}{i^2} 4 L(i) \sqrt{L(i)(i+1)}%
        +%
        \sum_{i=1}^\infty \sum_{j=L(i) + 1}^\infty 6 e^{1-j-1} \cdot 4
        \sqrt{j(i+1)}\\
        &=%
        \sum_{i=1}^\infty O\pth{ \frac{1}{i^{5/4}}} %
        +%
        \sum_{i=1}^\infty O\pth{\frac{1}{i^4}} = O(1).
    \end{align*}
\end{proof}

\end{document}